\documentclass[12pt]{article}
\usepackage[margin=1in]{geometry}
\usepackage[T1]{fontenc}
\usepackage{lmodern,microtype,amsmath,amssymb,amsthm,booktabs,array,longtable,graphicx}
\usepackage[colorlinks=true,linkcolor=blue,urlcolor=blue,citecolor=blue]{hyperref}
\newif\ifblind
\ifdefined\BlindVersion\blindtrue\else\blindfalse\fi
\ifblind
\hypersetup{pdftitle={Shared Models, Selective Trading, and Order Flow},pdfauthor={}}
\else
\hypersetup{pdftitle={Shared Models, Selective Trading, and Order Flow},pdfauthor={Victoria Ruojie Li and Arka Prava Bandyopadhyay}}
\fi
\usepackage[round,authoryear]{natbib}
\usepackage{tikz,fvextra}
\usetikzlibrary{arrows.meta,positioning}
\usepackage[font=small,labelfont=bf]{caption}
\usepackage{setspace}
\newtheorem{proposition}{Proposition}
\newtheorem{corollary}{Corollary}
\newcommand{\E}{\mathbb{E}}
\newcommand{\Prb}{\mathbb{P}}
\title{Shared Models, Selective Trading,\\and Order Flow}
\ifblind
\author{}
\else
\author{Victoria Ruojie Li \and Arka Prava Bandyopadhyay}
\fi
\date{September 30, 2026}
\begin{document}
\maketitle
\begin{abstract}

We study whether model diversity survives selection into trading. In synthetic markets with a fixed mixture of three language-model families, news presentation changes their representation among submitted orders. At the announcement round, Qwen's share of submitted orders shifts by 48 percentage points in the financing event, with little change in net order counts. In the workforce-reduction event, Mistral's share shifts by 40 percentage points while net counts reverse sign. Homogeneous populations remove opposing flow when their active decisions share a direction. An analytical decomposition shows why selection can improve or worsen price accuracy even at unchanged aggregate demand sensitivity. The evidence concerns presentation bundles and submitted flow; cleaner replication and a known-value validation are specified prospectively.
\end{abstract}
\noindent\textbf{JEL classifications:} G12, G14, G17, C63.\\
\noindent\textbf{Keywords:} large language models; algorithmic homogeneity; news presentation; trading agents; participation; order flow.

\section*{Introduction}
A market can contain several forecasting technologies while its submitted orders represent only one of them. Investors first decide whether to trade; aggregation occurs among those who participate. If a common news presentation changes participation differently across model families, the diversity of installed models need not survive in active order flow. We study this selection margin in markets populated by language-model agents.

The distinction matters for evaluating financial AI. Evidence that a model extracts predictive information from news concerns its standalone performance \citep{lopez2026}. It does not establish which models will supply orders when several technologies coexist. Conversely, one-sided trading does not establish inaccurate prices. Our question is narrower: does a common presentation intervention change active family representation within a fixed installed mixture, and when does that change coincide with a change in aggregate direction?

The setting contains 15 agents, five from each of three model families, with heterogeneous mandates and endowments. Matched experiments vary the presentation of financing and workforce-reduction announcements. Separate assignments replace the mixed population with a homogeneous population. Thirty configurations per contrast hold starting conditions and action-order schedules fixed; model-generation draws remain stochastic. Thick background liquidity constrains prices, so the observed outcomes are submitted decisions rather than price discovery.

The clearest result is a large change in active representation. In the layoff event, explanatory wording raises Mistral's share of announcement actions from 17.5\% to 57.4\%, despite its unchanged one-third population share. The pooled buy share rises from 27.5\% to 64.1\%. The Mistral share shift is 39.9 percentage points, with an exploratory paired-configuration bootstrap interval of [32.8,47.5]. Leaving out any one configuration gives shifts between 38.8 and 41.3 points. The pattern is not driven by a single configuration.

The financing event reveals why representation and market consequences must be measured separately. Original versus factual wording changes Qwen's active share by 48.1 percentage points and Llama's by $-61.0$ points, but net submitted buy-minus-sell counts change by only $-0.10$ per market. The layoff contrast changes net counts by 3.63 per market. Composition can move substantially while aggregate signed counts barely move, because participation scale and conditional direction can offset one another. We do not rank these two text interventions by efficiency.

The homogeneous arms supply a related boundary case. Under the original financing text, all-Mistral markets submit only buys and all-Qwen markets only sells in every announcement window; four of 30 mixed-control windows are nonempty and fully one-sided. This is a total population-assignment comparison, including changes in family identity and pre-event history. It is not a quality-held-constant estimate of a generic concentration effect. Follow-up and hold-rationale evidence further show why active direction cannot be interpreted as the opinion of every agent in a family.

Our contribution is a measured distinction between installed and active model composition, not the discovery of heterogeneous LLM behavior. \citet{lopeztrade} already studies language-model trading, and \citet{henning} compares homogeneous and mixed-model asset markets. \citet{leepark} studies presentation-sensitive price setting, while \citet{cataneo} documents provider differences in trading signals. We add a participation-based account of how common presentation changes the representation of families within a fixed mixture. The existing evidence establishes this pattern for two selected text bundles and three relatively small models. The bundles also change interpretation, social cues and length; they do not isolate information-free framing.

Two analytical benchmarks organize the next empirical step. The first separates agreement generated by extreme marginal action probabilities from residual dependence. The second distinguishes fixed demand-weight diversification from selection into demand. We extend the price-error calculation to allow active weights to depend on valuation errors. The additional covariance term explains why installed diversity, active concentration and standalone quality cannot generally order pricing outcomes. The identities discipline measurement; they are not claimed as a new general equilibrium theory.

We therefore distinguish three evidence stages. Study 1 provides the results in this paper. The prospective Study 2 tests active representation with cleaner texts and responsive residual supply; displacement under that supply rule remains a flow outcome. A separately specified known-value validation makes economic accuracy measurable, using exact payoff distributions and feasible-decision regret. Neither prospective study has supplied a substantive LLM treatment comparison. The accompanying implementation is a tested reference benchmark, not an empirical result.

\paragraph{Relation to finance research.}
The paper connects AI information processing \citep{gu,lopez2026} to participation and aggregation. Media exposure and stale-news studies show why participation belongs in the analysis of financial information \citep{engelberg,tetlock}; \citet{cheng} examines trading around ChatGPT outages. Our within-population representation margin differs from tool availability and is not evidence about human attention. Information diversity and endogenous responses can alter the link between better inputs and prices \citep{goldstein,banerjee}. Our selection calculation makes a related measurement point within a much simpler demand system. The price-error benchmark is also distinct from the return implications of machine forecast disagreement in \citet{bali}, as discussed below.

Algorithmic trading can improve market quality \citep{hendershott}; our results do not establish the opposite. They also do not identify the reinforcement-learning collusion mechanism studied by \citet{dou}. Finally, model information sets and leakage deserve explicit controls \citep{glasserman}. Synthetic events remove the need to recall the realized return of a named historical firm, but do not remove learned financial associations. Table~\ref{tab:position} records the closest comparisons; Appendix~\ref{app:legacy} preserves the auxiliary diagnostics.
\begin{table}[htbp]\centering\small\singlespacing
\caption{Related evidence and the active-representation margin}\label{tab:position}
\begin{tabular}{>{\raggedright\arraybackslash}p{.27\linewidth}>{\raggedright\arraybackslash}p{.29\linewidth}>{\raggedright\arraybackslash}p{.34\linewidth}}\toprule
Research & Existing object & Distinction in this paper\\\midrule
Lopez-Lira and Tang (2026), JFE; Gu et al. (2020), RFS & Information extraction and return prediction & Which available models supply active orders\\
Bali et al. (2026), RFS & Machine forecast dispersion and future returns & Active weights versus installed forecast diversity\\
Tetlock (2011), RFS; Engelberg and Parsons (2011), JF; Cheng et al. (2025), JAE & News exposure, repetition and AI availability & Presentation within a fixed mixed population\\
Goldstein and Yang (2015), JF; Banerjee et al. (2018), RFS & Information structure and equilibrium prices & Participation accounting and a conditional price benchmark\\
Hendershott et al. (2011), JF & Automation and liquidity & Flow composition is not itself market quality\\
Lopez-Lira (2025); Henning et al. (2025), working papers & LLM markets; single versus mixed populations & Installed composition held fixed for the primary text contrast\\
Lee and Park (2026); Cataneo (2026), working papers & Presentation-sensitive pricing; provider-specific signals & Selection into submitted order flow\\
Dou et al. (2025), working paper & Reinforcement learning, collusion and price efficiency & Non-strategic accounting of active representation\\\bottomrule
\end{tabular}
\end{table}

\section{Experimental environment}
\subsection{Agents, information, and market mechanics}
Each market contains 15 agents and runs for 25 rounds, indexed from zero. In the mixed condition, five agents use each of Llama 3.1 8B, Qwen 2.5 7B, and Mistral 7B. The implementation calls the corresponding Ollama model tags. A seed draws initial cash between 900 and 1,100, initial holdings from zero to five shares, and a mandate from four categories: modest growth with low risk, capital preservation, opportunistic trading with small risks allowed, and steady income. The seed also determines the action-order randomization.

The prompt supplies the mandate, the agent's cash and shares, the recent price, top bid and ask sizes, and the current news. It permits holding and requests a JSON action, limit price, quantity, and short reason. It does not supply other agents' identities, model assignments, or reasoning. Agents act sequentially and can therefore encounter market states affected by earlier actions. The absence of direct communication is not an assumption of statistical independence.

Background bid and ask orders are thick relative to the agents' permitted orders: the runner uses 300-share background liquidity at prices 99.5 and 100. Submitted quantities are capped at five shares and restricted by cash or inventory. Invalid or zero-quantity actions can become holds during cleaning. Throughout, an active decision means a cleaned buy or sell with positive quantity. It is not necessarily an execution, and order counts are not share volume.

Background orders are posted once at round zero and are not replenished. Incoming orders execute against the best crossing resting order at its limit price; stable sorting gives arrival priority within a price. Agent orders can therefore execute against one another inside the background spread. Across the 308 audited run files (including designated pilots and the all-Llama comparison), the maximum cumulative submitted quantity on either side in a run is 94 shares, below the initial depth of 300 on each side. Submitted quantity bounds background consumption from above because some agent orders match one another. All recorded last prices in this audited set lie between 99.5 and 100.0, including interior values. The narrow realized price band is an empirical property of these runs, rather than a general guarantee of a double auction.

No JSON parse failures occur in the 115,500 decisions in that audited set. This is a data-integrity denominator, not the estimation sample for any one result; pilots remain excluded from the main contrasts. Successful parsing does not imply that a requested action is feasible or survives cleaning.

We use two named temporal conventions. The \emph{event round} is exactly $t=5$ or $t=12$, used for participation and rationale counts. The \emph{event window} comprises that round and the following two rounds, used for the stated co-action and homogeneous-flow analyses. The original co-action statistic requires overlapping active rounds; later first-action diagnostics use the first active direction anywhere in the window. Section~2 specifies their distinct pair weighting. The exact prompt and recorded treatment texts are reproduced in Appendix~\ref{app:stimuli}.

The thick background liquidity constrains the price response. Small observed transaction-price movements within this structure do not identify price discovery, fundamental mispricing, or price impact in a market with elastic liquidity supply. Our aggregate outcomes concern the agents' submitted flow, excluding background orders.

\subsection{Experiments and matching}
Table~\ref{tab:design} distinguishes the experimental batches. The main reported contrasts use 30 seed configurations each. The original mixed financing batch and the later Design A original-text batch are distinct stochastic realizations; their counts should not be interchanged. The homogeneous comparison uses the later Design A mixed original-text control. Earlier two-seed pilots are excluded from the main estimation counts.

\begin{table}[htbp]\centering\small
\caption{Experimental contrasts and interpretation}\label{tab:design}
\begin{tabular}{p{0.22\linewidth}p{0.34\linewidth}p{0.34\linewidth}}
\toprule
Experiment & Contrast & Identified comparison in this setting\\\midrule
Original financing & Mixed population, announcement and follow-up nodes & Descriptive family responses and conditional co-action\\
Follow-up text & Interpretive, factual restatement, no-update text at $t=12$ & Response to the follow-up text bundle under matched starting conditions\\
Design A financing & Original versus de-emotionalized announcement at $t=5$ & Presentation-bundle effect within financing\\
Homogeneous arms & All Mistral or all Qwen versus mixed control & Population-composition intervention under original financing text\\
Layoff robustness & Factual versus explanatory announcement at $t=5$ & Presentation-bundle effect within a second event\\\bottomrule
\end{tabular}
\end{table}

The financing event describes an equity issue associated with funding relief and an approximately 30\% increase in share count. The original text adds interpretations and describes some traders buying while others sell. Design A removes the interpretive layer while retaining the event's core financing facts. The layoff event describes a 20\% workforce reduction, a 25\% operating-cost reduction, a longer cash runway, the loss of one of three development teams, and two quarters of sales growth below plan. The explanatory arm adds consequences and embedded accounts of traders' reactions.

These are \emph{bundled presentation treatments}. In particular, language about permanently reduced capacity or what other traders believe may convey substantive propositions as well as tone. A causal effect of assignment to the supplied text is not a separately identified effect of sentiment vocabulary, emotionality, or information-free framing. The factual follow-up in the original study also mentions both buy and sell orders, so it is not a perfectly social-cue-free control.

In the layoff study, odd seeds receive cash-first clause order and even seeds capacity-first order. The same order is used in both arms of each seed. This balances the two orders across the batch but does not randomize order independently of seed or provide a within-seed order contrast.

\subsection{Replication and stochastic variation}
Configuration (simulation) seeds are recorded for every run; they fix structural starting conditions and action-order schedules, not model-generation randomness. The agent code called Ollama with the server's default sampling settings and no explicit generation seed, so per-call generation seeds and the context length were neither set nor recorded. Thus, identical simulation seeds do not imply identical sampled language-model outputs or identical realized pre-event states across arms. The paired estimates should be read as differences under matched configurations with remaining sampling noise, rather than exact deterministic counterfactuals at a common realized history.

In text comparisons, pre-intervention differences reflect remaining generation variation under the common pre-intervention design. In composition comparisons, the family assignment applies from the beginning of the run, so differences in pre-announcement histories are themselves part of the treatment path. The latter contrast is a total population-assignment effect under matched initial conditions, not an effect holding realized event-time portfolios or prices fixed. The repeated announcement batches below provide an empirical check on directional stability under generation variation, without fixing that variation by design.

The archive records Ollama 0.31.2 for Design A and the homogeneous arms and 0.33.2 for the layoff batch; no server version was recorded for the original financing batches and their follow-up controls. Within-batch comparisons use the same recorded version. Cross-theme comparisons also change inference-stack version and stimulus construction, precluding a clean theme-by-wording interaction. The archived preregistration files establish the documented analysis intentions; this paper does not claim an independently time-stamped public registration. The prospective Study~2 will fix and record sampling settings, context length and per-call generation seeds (Appendix~\ref{app:study2}); these settings are a design choice, not a reconstruction of the historical environment.
\section{Outcomes and inference}
\subsection{Participation and conditional direction}
For agent $i$, seed $s$, round $t$, and treatment $z$, write the cleaned action as $A_{ist}(z)\in\{-1,0,1\}$, representing sell, hold, and buy. Let $D_{ist}=\mathbf{1}\{|A_{ist}|=1\}$. Announcement-round participation for family $f$ is
\begin{equation}
R_{fs}(z)=\sum_{i:F_i=f}D_{is5}(z).
\end{equation}
The paired mean effect is $\widehat\Delta_f=30^{-1}\sum_s[R_{fs}(1)-R_{fs}(0)]$. Each family contributes five decisions per seed at this round, so pooled announcement tables have a denominator of 150 per family and arm. Three-round windows have 450 decision opportunities. Mixing these denominators changes the estimand.

We also report $p_f(z)=\Prb(A=1\mid D=1,F=f,z)$, estimated from active decisions within the indicated round or window. This conditional probability is distinct from the unconditional buy rate. Because participation responds to treatment, a difference in $p_f$ compares selected active populations. It need not represent an individual-level change in direction among agents that would trade in both arms.

\subsection{Co-action and imbalance}
The later Design A and layoff analyses define an agent's window direction by its first active action in $W_5=\{5,6,7\}$. Within a seed, jointly active agent pairs are classified as same-family or cross-family, and agreement is the indicator that their first directions match. The original audit instead compares actions only in rounds when both agents act: it averages agreement over their shared active rounds, then equally over eligible pairs in each class within a seed. Pairs without an overlapping active round are excluded. These statistics should not be silently pooled across conventions. We report the original seed-mean gap separately from the later first-action diagnostics.

A seed-level same-minus-cross gap is defined only when both pair categories are observed. We report its computable-seed count. Pooled pair agreement weights seeds by their eligible pairs, whereas a mean of seed gaps weights computable seeds equally. They are different summaries; their numerical differences are not errors.

Let $B_s$ and $S_s$ count all submitted buy and sell decisions in the announcement window. Define
\begin{equation}
OS_s=\frac{|B_s-S_s|}{B_s+S_s},\qquad
FULL_s=\mathbf{1}\{B_s+S_s>0,\ B_sS_s=0\}.
\end{equation}
$OS_s$ is undefined when nobody trades. $FULL_s$ is zero for such a seed under the supplied analysis convention. Consequently, inference for continuous one-sidedness and the fully-one-sided indicator can have different denominators and p-values.

\subsection{Uncertainty and multiplicity}
The simulation seed, not an agent pair or an individual decision, is the unit for reported paired tests. Exact sign tests discard zero differences unless a tie-as-failure sensitivity is explicitly indicated. These tests concern the balance of positive and negative differences among nonzero cases, not directly the null that the mean effect is zero. Percentile intervals in the supplied scripts resample seeds 10,000 times and summarize uncertainty in mean contrasts under the chosen configuration distribution.

For layoff total participation, the documented primary alternative is one-sided. For family participation, the documented alternatives are two-sided, with each family assessed against a 0.01 threshold. The shipped script prints one-sided family p-values; our corrected table uses the two-sided values. The archive's rule is not itself a general familywise error guarantee at 0.01. Mistral's result also survives a Bonferroni correction for three family tests, so this issue does not change that conclusion. Other diagnostics are descriptive unless explicitly designated otherwise.

\section{Evidence}
\subsection{Active composition and aggregate direction}\label{sec:composition}
The three families each represent one-third of agent instances. Their shares of active decisions need not equal one-third. Let $n_f$ denote pooled announcement actions by family and let $\widetilde w_f=n_f/\sum_g n_g$. Define active-family concentration as $H_A=\sum_f\widetilde w_f^2$. Table~\ref{tab:active} is computed from the archived raw announcement logs. These are pooled arm-level quantities across seeds, not means of within-market indices.

\begin{table}[htbp]\centering\small\singlespacing
\caption{Pooled announcement composition and direction}\label{tab:active}
\begin{tabular}{llrrrr}\toprule
Event & Text & Active & $H_A$ & Buy share & Net orders\\\midrule
Financing & Original &129&0.390&0.388&$-29$\\
Financing & Factual &66&0.655&0.303&$-26$\\
Layoff & Explanatory &195&0.438&0.641&55\\
Layoff & Factual &120&0.441&0.275&$-54$\\\bottomrule
\end{tabular}
\par\smallskip\footnotesize Population-family concentration is $1/3$ in all arms. Net orders are buy decisions minus sell decisions, excluding background liquidity. This table does not measure volumes, executions, price errors, or offsetting orders within the same market.
\end{table}

Financing factual text concentrates active decisions in Llama, raising $H_A$ even though agent assignments are unchanged. Layoff wording produces a different pattern: $H_A$ is approximately unchanged while the dominant active family changes from Qwen to Mistral and the sign of pooled net flow reverses. A provider-share or active-family Herfindahl alone therefore misses the identity and direction of the represented readings.

For each arm $z$, let $B(z)=\sum_f\widetilde w_f(z)p_f(z)$ be the pooled buy share among active decisions. When all family-arm marginals are defined, write $\overline p_f=[p_f(1)+p_f(0)]/2$ and $\overline w_f=[\widetilde w_f(1)+\widetilde w_f(0)]/2$. Then
\begin{equation}\label{eq:accounting}
B(1)-B(0)=\underbrace{\sum_f\Delta\widetilde w_f\overline p_f}_{\text{active-composition component}}
+\underbrace{\sum_f\overline w_f\Delta p_f}_{\text{conditional-direction component}}.
\end{equation}
This exact identity symmetrizes the two possible sequences of changes; expanding either side proves equality. For explanatory minus factual layoff text, the two components are 0.3441 and 0.0219, summing to 0.3660. For original minus factual financing text, they are 0.2012 and $-0.1167$, summing to 0.0846. The latter conditional component is sensitive to Qwen's factual-arm marginal, estimated from only two active decisions.

These are ex post accounting terms. They do not hold the identity or potential direction of participants fixed and are not path-specific causal effects. Pooled participation is positive for every family-arm cell here; if a cell were empty, its conditional direction would be undefined and the corresponding decomposition could not be reported without an additional convention. We report no post hoc significance test for these components. Their role is to describe how the observed extensive and conditional-direction margins combine, using all 30 seeds without pretending that pooled trades are independent.

\begin{figure}[htbp]\centering
\includegraphics[width=\linewidth]{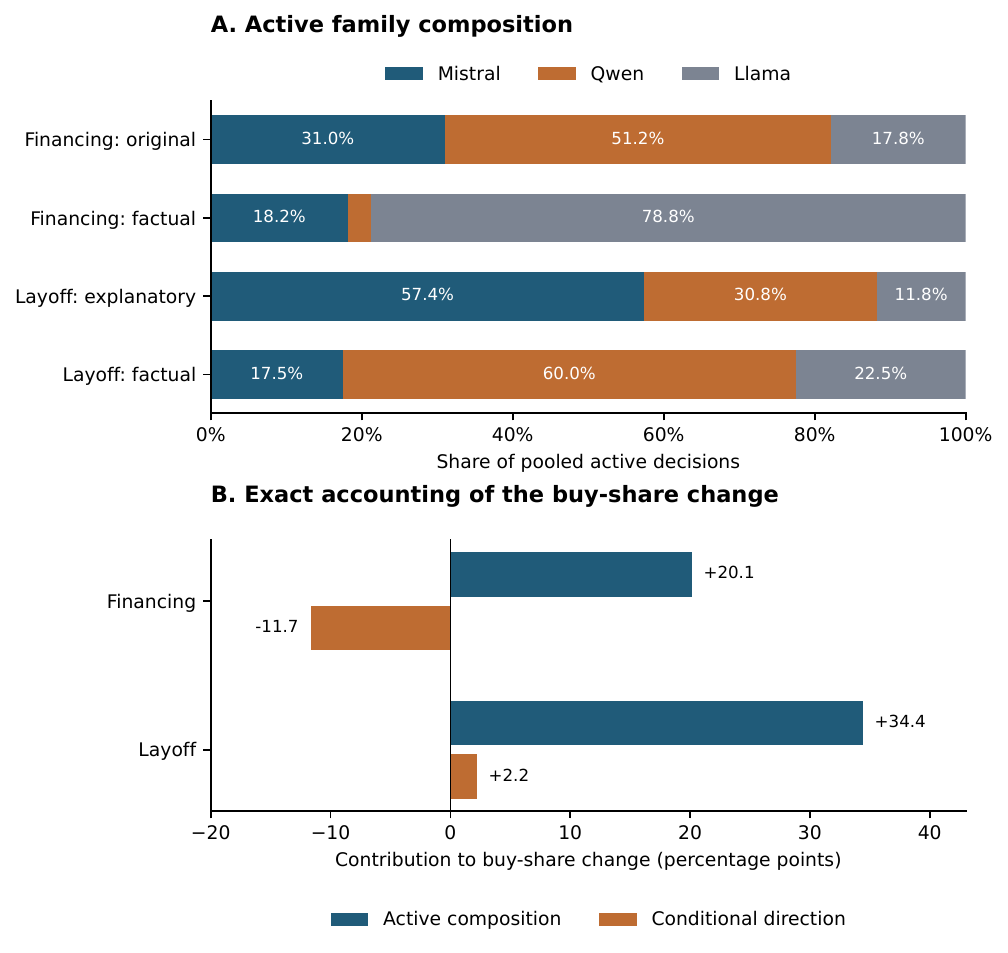}
\caption{Presentation changes the representation of model families}
\label{fig:composition}
\par\smallskip\footnotesize Panel A shows family shares of pooled announcement actions; the population share is one-third per family in every arm. Panel B decomposes the change in the pooled buy share using equation~\eqref{eq:accounting}; original minus factual for financing and explanatory minus factual for layoffs. These are descriptive pooled quantities, without causal-mediation interpretation or independent-trade inference.
\end{figure}

\subsection{A second event and its limits}
The layoff study uses a distinct event and two matched presentation arms. Table~\ref{tab:layoff} uses the event-round convention: explanatory Llama has 23 active decisions at $t=5$, whereas the audit reports 25 over the event window. These are consistent counts for different windows. At the announcement, total activity rises from 120 to 195, a 62.5\% increase.

\begin{table}[htbp]\centering
\caption{Layoff announcement decisions at $t=5$}\label{tab:layoff}
\begin{tabular}{llrrrr}\toprule
Text & Family & Buy & Sell & Hold & Active rate\\\midrule
Factual & Mistral &21&0&129&14.0\%\\
Factual & Qwen &5&67&78&48.0\%\\
Factual & Llama &7&20&123&18.0\%\\\addlinespace
Explanatory & Mistral &112&0&38&74.7\%\\
Explanatory & Qwen &9&51&90&40.0\%\\
Explanatory & Llama &4&19&127&15.3\%\\\bottomrule
\end{tabular}
\end{table}

\begin{table}[htbp]\centering\small
\caption{Paired layoff participation effects: explanatory minus factual}\label{tab:tests}
\begin{tabular}{lrrrrl}\toprule
Outcome & Mean & Positive & Negative & Ties & Exact $p$\\\midrule
Total &2.500&26&3&1&$7.62\times10^{-6}$ (one-sided)\\
Mistral &3.033&28&0&2&$7.45\times10^{-9}$ (two-sided)\\
Qwen &$-0.400$&8&12&10&0.503 (two-sided)\\
Llama &$-0.133$&10&13&7&0.678 (two-sided)\\\bottomrule
\end{tabular}
\par\smallskip\footnotesize Thirty paired seeds; mean in active decisions per seed. Bootstrap 95\% intervals: total $[1.60,3.37]$; Mistral $[2.53,3.50]$; Qwen $[-0.97,0.17]$; Llama $[-0.57,0.30]$. Ties dropped in exact tests.
\end{table}

Mistral supplies the aggregate increase. Qwen and Llama differences are not statistically resolved by the documented family tests. In the window-based directional criterion, a family is evaluable at 15 or more active decisions and near-unanimous if its buy fraction is at least 0.9 or at most 0.1. Three of six family-arm cells qualify: Mistral in both arms and Qwen in the factual arm. Qwen's explanatory buy share is exactly $9/60=0.15$, which fails the 0.1 threshold. Thus the directional hypothesis is only partially met.

Factual wording does not generally make Qwen inactive: Qwen has 72 announcement actions under factual layoff text but only two under factual financing text. Mistral remains buy-side across these treatments. The defensible cross-event conclusion is that presentation affects participation and the active mix of families; neither universal withdrawal under factual text nor a general reshuffling of directional preferences is established.

Clause order produces large descriptive differences when pooling arms: cash-first configurations generate 119 announcement actions with a buy fraction of 0.647, versus 196 and 0.413 for capacity-first configurations. Because order follows seed parity, these contrasts also differ in seed configurations. A within-seed order swap is needed for a clean order-treatment estimate. No cross-theme interaction test is reported.

\subsection{Homogeneous populations and one-sided flow}
The composition experiment replaces the mixed population with 15 Mistral or 15 Qwen agents under the original financing text. Table~\ref{tab:homo} reports submitted order counts over $W_5$. All 159 Mistral orders are buys and all 215 Qwen orders sells. Both homogeneous arms are fully one-sided in all 30 seeds. The mixed control produces both directions in most active seeds and is fully one-sided in four seeds; one mixed seed has no active decisions.

\begin{table}[htbp]\centering
\caption{Composition and order flow in $W_5=\{5,6,7\}$}\label{tab:homo}
\begin{tabular}{lrrrr}\toprule
Population & Buys & Sells & Fully one-sided & Mean $OS$\\\midrule
All Mistral &159&0&30/30&1.000\\
All Qwen &0&215&30/30&1.000\\
Mixed control &52&81&4/30&0.359\\\bottomrule
\end{tabular}
\par\smallskip\footnotesize Mixed $OS$ is defined in 29 seeds. Counts exclude background liquidity and represent submitted decisions, not completed trades or volume.
\end{table}

For either homogeneous arm, the paired mean increase in $OS$ is 0.641 over the 29 defined comparisons. There are 25 positive differences and four ties, giving one-sided $p=2.98\times10^{-8}$. The binary $FULL$ contrast uses all 30 seeds: 26 positives and four ties, giving $p=1.49\times10^{-8}$. These p-values answer different questions.

Qwen increases activity by 2.733 orders per seed relative to mixed control; Mistral increases it by 0.867. The Mistral activity conclusion is sensitive to the tie convention. The common, stronger result is the loss of opposing flow. At the follow-up, however, the mixed market is itself predominantly buy-side, and the $OS$ comparison is not significant under the drop-ties sign test ($p=0.125$). Composition matters in relation to the event-specific differences between families.

This intervention changes both concentration and which models make decisions. It estimates a composition effect in this market. It does not hold average model quality constant, estimate a concentration coefficient independent of family identity, or establish that model diversity always improves outcomes.

\subsection{Configuration uncertainty and the limits of aggregation}
Table~\ref{tab:influence} adds an exploratory resampling and influence analysis of the existing logs. It retains both arms of a configuration together and recomputes pooled active shares in each resample. These are ratios of pooled counts, not averages of market-level shares. The interval target is conditional on each event and the configuration generator; it does not represent uncertainty across future corporate announcements. All six family-share shifts retain their sign under deletion of any single configuration.
\begin{table}[htbp]\centering\small\singlespacing
\caption{Active representation: configuration uncertainty and influence}\label{tab:influence}
\begin{tabular}{llrrr}\toprule
Event & Family & Shift (pp) & 95\% interval & Leave-one-out range\\\midrule
Financing & Mistral & 12.8 & [2.1, 23.0] & [10.2, 15.6]\\
Financing & Qwen & 48.1 & [40.7, 55.3] & [46.8, 50.4]\\
Financing & Llama & -61.0 & [-70.4, -51.2] & [-63.2, -59.0]\\
Layoff & Mistral & 39.9 & [32.8, 47.5] & [38.8, 41.3]\\
Layoff & Qwen & -29.2 & [-39.2, -19.6] & [-30.7, -27.2]\\
Layoff & Llama & -10.7 & [-19.0, -3.0] & [-12.3, -9.3]\\
\bottomrule\end{tabular}
\par\smallskip\footnotesize Original minus factual for financing; explanatory minus factual for layoffs. Shares pool active announcement decisions over 30 paired configurations. Intervals resample complete paired configurations 10,000 times (seed 20260923); they are exploratory pointwise percentile intervals, conditional on the event and configuration generator. Leave-one-out ranges are influence diagnostics, not confidence intervals. No event-population or mediation inference is implied.
\end{table}

Large share shifts need not imply large changes in net flow. Financing net buy-minus-sell counts equal $-29$ under original and $-26$ under factual text, a difference of $-0.10$ per market. Layoff counts equal 55 and $-54$, a difference of 3.63 per market. The corresponding exploratory paired-bootstrap intervals are [$-0.77$,0.63] and [2.80,4.47]. The financing interval is not an equivalence test. These count outcomes exclude quantities and executions. This contrast rules out using the magnitude of active-share movement alone as a sufficient statistic for a market consequence.

Sparse conditional directions require additional restraint. In 1,256 of 10,000 financing paired-configuration bootstrap resamples, at least one family-arm cell has no active decisions. Active family shares remain defined at the pooled-arm level, but a conditional-direction decomposition is undefined in those draws. We therefore do not silently discard those draws to create a decomposition confidence interval. The financing conditional-direction component remains descriptive. No such empty family-arm cell occurs in the layoff resamples; that fact still does not make its accounting decomposition a causal mediation estimate.

As an additional denominator check, the replication script reports the joint family-by-buy/sell/hold distribution using all assigned announcement opportunities. This outcome is always defined, including for inactive markets, but combines participation and conditional direction. It supplements rather than replaces active representation. All newly added checks are exploratory analyses of an inspected archive.

\subsection{Auxiliary evidence}
Four repeated announcement batches retain buy-only active Mistral decisions and sell-only active Qwen decisions, conditional on the same announcement and reused configurations. Follow-up factual and no-update controls alter participation without uniformly altering active direction. The hold audit also finds bearish rationale text among abstaining Mistral agents, despite buy-only active Mistral decisions. The detailed tables, mandate patterns and coding limitations appear in Appendix~\ref{app:legacy}. Repeated generation, a second event, and auxiliary text evidence answer different robustness questions; none supplies a broad model- or event-population replication.

\section{A framework for shared readings}
\subsection{Marginal agreement versus residual dependence}
Fix a conditioning cell $c$ containing the event, text, and other relevant state. Among jointly active agents $i$ and $j$, let $X_i=\mathbf{1}\{A_i=1\}$, and define $p_i=\E[X_i\mid D_i=D_j=1,c]$, with $p_j$ similarly defined. A direct identity gives
\begin{equation}\label{eq:agree}
\Prb(X_i=X_j\mid D_i=D_j=1,c)
=p_ip_j+(1-p_i)(1-p_j)+2\operatorname{Cov}(X_i,X_j\mid D_i=D_j=1,c).
\end{equation}
\begin{proof}
The agreement indicator is $X_iX_j+(1-X_i)(1-X_j)=1-X_i-X_j+2X_iX_j$. Taking conditional expectations and substituting $\E[X_iX_j]=p_ip_j+\operatorname{Cov}(X_i,X_j)$ yields the identity.
\end{proof}

\begin{figure}[htbp]\centering\singlespacing
\begin{tikzpicture}[box/.style={draw=black!65,rounded corners=2pt,align=center,text width=4.0cm,minimum height=1.05cm,font=\small},>=Stealth]
\node[box] (text) at (0,0) {Common event text\\and presentation};
\node[box] (state) at (6,0) {Model family, mandate,\\inventory, and market state};
\node[box] (active) at (3,-2) {Participation and\\conditional direction};
\node[box] (flow) at (0,-4) {Active family weights\\and submitted order balance};
\node[box,dashed] (price) at (6,-4) {Price-error consequences\\in a responsive market};
\draw[->] (text)--(active);\draw[->] (state)--(active);
\draw[->] (active)--(flow);\draw[->,dashed] (flow)--(price);
\end{tikzpicture}
\caption{From a shared decision technology to active order flow}\label{fig:mechanism}
\par\smallskip\footnotesize The solid path organizes measured decisions and descriptive composition accounting. The dashed path requires a value benchmark and responsive liquidity; it is an analytical implication and proposed test. Arrows do not represent estimated causal mediation effects. Common market feedback remains in the simulator.
\end{figure}

The first two terms are the agreement benchmark under conditional independence. If both marginals equal one, agreement equals one even though the covariance is zero. If one equals one and the other zero, agreement equals zero. Consequently, unanimous same-family actions and complete cross-family disagreement can arise without residual dependence.

In empirical implementations, a family buy rate among all active decisions is often substituted for the pair-specific marginal conditional on joint activity. This substitution additionally requires that partner selection not change the relevant marginal. It is a useful descriptive benchmark here, but not an automatic estimate of residual covariance. Differences in mandates, inventories, order of action, and selection into joint activity can violate the substitution.

The archive's financing decomposition reports agreement of 243/243 for same-model extreme-family pairs, 47/47 for aligned cross-model extreme-family pairs, and 0/182 for opposed extreme-family pairs. The corresponding marginal predictions are one, one, and zero. Non-extreme cells show small residuals in both directions. These calculations are consistent with a shared-reading account. They do not prove conditional independence, rule out economically meaningful residual dependence, or constitute a formal equivalence test: the pairs overlap, and the most striking cells have little or no remaining directional variation.

\subsection{Two distinct directional benchmarks}\label{sec:benchmarks}
A family-marginal benchmark asks whether observed pair agreement differs from the agreement implied by family buy rates. A pooled-mix permutation benchmark instead asks whether the directions of active decisions are exchangeable across agents within a seed and round. They preserve different objects and answer different questions.

The archived family-marginal calculation gives predicted versus observed agreement of 0.945 versus 0.952 for same-family pairs and 0.278 versus 0.293 for cross-family pairs at the announcement. At the follow-up, the corresponding values are 0.983 versus 1.000 and 0.838 versus 0.838. These are pooled eligible-pair diagnostics, rather than the equal-weighted mean seed gaps used for inference. Their small discrepancies are consistency evidence, not an equivalence test.

For a reproducible pooled-mix diagnostic, we independently permute buy and sell labels among active agents within each seed and round, preserving activity, family assignments, and the realized pooled buy count. We reconstruct the original overlapping-round pair statistic, average within seed, and then average over computable seeds. Table~\ref{tab:permutation} uses 2,000 draws with analysis random seed 20260917 and the upper-tail correction $(1+\#\{T_b\geq T_{obs}\})/(2001)$.

\begin{table}[htbp]\centering\small\singlespacing
\caption{Exploratory permutation benchmark for family-associated direction}\label{tab:permutation}
\begin{tabular}{lrrrr}\toprule
Event window & Computable seeds & Observed gap & Null s.d. & Upper-tail $p$\\\midrule
Announcement, $W_5$ &25&0.7327&0.1015&0.00050\\
Follow-up, $W_{12}$ &18&0.1444&0.0611&0.04498\\\bottomrule
\end{tabular}
\par\smallskip\footnotesize No agents or events are independently added by the permutations. The null assumes direction exchangeability among active agents conditional on seed and round. It is not a randomization test of family assignment or a test isolating residual within-family dependence. The supplied historical audit reports approximately 0.041 at the follow-up; the new reproducible Monte Carlo run reports 0.04498.
\end{table}

The announcement strongly departs from pooled direction exchangeability, while its observed agreement is close to the family-marginal benchmark. Both statements can hold: family-associated marginal direction explains much of the observed pair pattern. Mandates, states, and selective participation may also violate the permutation null, so rejection does not isolate architecture as the causal channel. The follow-up result is weaker and exploratory; its paired sign test remains $p=0.125$. The two p-values address different null hypotheses and should not be selected between to claim confirmation.

\subsection{Why the conditioning level matters}
Pooling heterogeneous cells can itself create apparent excess agreement. By the law of total covariance,
\begin{equation}
\operatorname{Cov}(X_i,X_j)
=\E[\operatorname{Cov}(X_i,X_j\mid C)]
+\operatorname{Cov}(\E[X_i\mid C],\E[X_j\mid C]).
\end{equation}
A common cell variable can therefore generate unconditional dependence even when agents are independent within each cell. Conversely, deterministic same-direction actions within a fixed cell need not exhibit a positive covariance at all.

In the factual layoff data, Qwen's window buy probability is 5/16 in cash-first configurations and 0/56 in capacity-first configurations. Pooling these conditions produces an inappropriate benchmark for pairs drawn disproportionately from one condition. The supplied audit reports that the apparent excess agreement in the pooled extreme cells largely disappears after conditioning on clause order. Same-family extreme-cell agreement is 6/6 and 83/83 in the two factual order cells, matching their unit benchmarks. Under explanatory capacity-first wording, same-family agreement is 113/114 compared with a benchmark of 0.993; opposed cross-family agreement is 4/66 compared with 0.067. These remain descriptive diagnostics, not independent pair-level observations for hypothesis tests.

Clause order is the explicitly examined stratification in this diagnostic; the archive does not establish stability with respect to every mandate, inventory, or market-state variable.

The empirical requirement is to condition at the level where relevant response distributions are stable. A stronger residual-dependence analysis would estimate marginals on independent repetitions, evaluate agreement on held-out repetitions, retain seed-level dependence in resampling, and stratify or model relevant private states. The current archive does not supply that fully separated design.

\subsection{Participation changes effective model composition}
Let $w_f$ denote population shares, $a_f(c)$ participation probabilities, and $p_f(c)$ conditional buy probabilities. For a randomly selected agent in a fixed cell, the probabilities of a submitted buy and sell are
\begin{equation}
b(c)=\sum_f w_f a_f(c)p_f(c),\qquad
s(c)=\sum_f w_f a_f(c)[1-p_f(c)].
\end{equation}
Among active decisions, the effective family share is
\begin{equation}
\widetilde w_f(c)=\frac{w_f a_f(c)}{\sum_g w_g a_g(c)}.
\end{equation}
The buy share in the active population is $\sum_f\widetilde w_f p_f$. Thus, wording can move aggregate directional balance by changing participation even when each $p_f$ stays fixed. Equal numbers of model instances do not guarantee equal representation in submitted flow.

The layoff counts illustrate this mechanism. Although Mistral constitutes one-third of agents in both arms, its share of announcement actions rises from $21/120=17.5\%$ to $112/195=57.4\%$. A pooled same-model co-action rate consequently changes its composition: in the explanatory window, 170 of 239 eligible same-model pairs are Mistral--Mistral; in the factual window, 93 of 108 are Qwen--Qwen. Comparing the two pooled rates without this accounting risks interpreting selection as a change in an intrinsic coordination parameter.

\begin{proposition}[One-sidedness without residual coupling]
Suppose that, conditional on a fixed cell and a fixed population assignment, agents' actions are independent, with buy, sell, and hold probabilities $b_i,s_i,h_i$, summing to one. Then the probability of nonempty one-sided submitted flow is
\begin{equation}\label{eq:onesided}
\Prb(FULL=1)=\prod_i(1-s_i)+\prod_i(1-b_i)-2\prod_i h_i.
\end{equation}
If every participating agent can only buy, this probability equals $1-\prod_i h_i$. The analogous statement holds for sell-only agents.
\end{proposition}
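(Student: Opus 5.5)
The plan is to use inclusion–exclusion over the two events that make submitted flow one-sided. Let $\mathcal{B}$ be the event that no agent sells ($S=0$) and $\mathcal{S}$ the event that no agent buys ($B=0$). By the definition of $FULL_s$ in Section~2, $FULL=1$ means $B+S>0$ and $BS=0$. That is the event $(\mathcal{B}\cup\mathcal{S})\setminus(\mathcal{B}\cap\mathcal{S})$, where $\mathcal{B}\cap\mathcal{S}$ is the event that everyone holds.

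Since $\mathcal{B}\cap\mathcal{S}\subseteq\mathcal{B}\cup\mathcal{S}$, the probability of this symmetric difference is
\[
\Prb(\mathcal{B})+\Prb(\mathcal{S})-2\Prb(\mathcal{B}\cap\mathcal{S}).
\]
The empty-market case, which the paper's convention scores as $FULL=0$, is exactly the intersection term. It is removed twice: once by inclusion–exclusion and once because a market with no active decisions is not counted as one-sided.

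Next, conditional independence within the fixed cell and population assignment factorizes each term. Agent $i$ does not sell with probability $b_i+h_i=1-s_i$, so $\Prb(\mathcal{B})=\prod_i(1-s_i)$. Agent $i$ does not buy with probability $1-b_i$, so $\Prb(\mathcal{S})=\prod_i(1-b_i)$. Every agent holds with probability $\prod_i h_i$, which gives $\Prb(\mathcal{B}\cap\mathcal{S})=\prod_i h_i$. Substituting these three products yields equation~\eqref{eq:onesided}.

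For the special case, "every participating agent can only buy" means $s_i=0$ for all $i$. Then $\prod_i(1-s_i)=1$ and $1-b_i=h_i$, so $\prod_i(1-b_i)=\prod_i h_i$. The formula collapses to
\[
1+\prod_i h_i-2\prod_i h_i=1-\prod_i h_i.
\]
The sell-only case follows by exchanging the roles of $b_i$ and $s_i$.

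The only real obstacle is aligning the event with the paper's $FULL$ convention. The argument must treat the all-hold outcome as $FULL=0$, and it must use that $B$ and $S$ count decisions, so $S=0$ is equivalent to no agent selling. Independence is used only to factor the three products. No assumption on the family structure is needed beyond the fixed cell, which is precisely what makes the one-sidedness arise without residual coupling.
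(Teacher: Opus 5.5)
Your proof is correct and essentially matches the paper's: the paper writes $FULL=1$ as the disjoint union of nonempty buy-only flow and nonempty sell-only flow, computing each as $\prod_i(1-s_i)-\prod_i h_i$ or $\prod_i(1-b_i)-\prod_i h_i$, which is your symmetric-difference count $\Prb(\mathcal{B})+\Prb(\mathcal{S})-2\Prb(\mathcal{B}\cap\mathcal{S})$ split into its two disjoint pieces. The buy-only special case is handled the same way, through $s_i=0$ implying $1-b_i=h_i$.
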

\begin{proof}
The no-sell event has probability $\prod_i(1-s_i)$; removing the all-hold event gives buy-only nonempty flow. The symmetric expression gives sell-only nonempty flow. These two events are disjoint, yielding the formula. If all $s_i=0$, then $1-b_i=h_i$ and the formula simplifies as stated.
\end{proof}

This proposition is an explanatory benchmark, not an assumption verified for the sequential simulator. It shows that one-sidedness can approach certainty in a homogeneous population with an extreme conditional direction and sufficient participation, without pair-level stochastic coupling. With deterministic positive participation, a population containing both a buy-only and a sell-only family instead has two-sided flow. No unrestricted monotonicity in model diversity follows: two different families can share a direction, and diversity can alter participation as well as direction.

The framework therefore yields a conditional prediction. Composition changes are most likely to remove opposing flow when the removed family supplies a distinct reading and is active. It does not imply that more provider names always generate more effective behavioral diversity. Nor does it determine whether either direction is correct, since the experiments do not provide an independently calibrated fundamental value.

\section{Active weights and economic accuracy: analytical benchmarks}\label{sec:pricing}
The current simulator constrains prices with thick pegged liquidity. To state what would be needed for an economic consequence, we add a transparent linear-demand benchmark. It is not estimated on the logs, is not a rational-expectations solution, and does not assume that disagreement itself is an error. It illustrates a distinction between individual precision and the covariance of deployed readings.

The possibility that algorithmic trading weakens price efficiency is not itself new. \citet{dou} studies reinforcement-learning trading and algorithmic collusion. Here the analytical channel is the covariance of deployed interpretation errors, not learned strategic restraint. Study 2 below does not estimate this price-error model: it has no independent fundamental value.

Let the terminal fundamental be $v$. An outside sector has a valuation $v+e_0$ and aggregate net supply $\lambda[P-(v+e_0)]$, where $\lambda>0$ is its price sensitivity. This does not give that sector knowledge of $v$: $e_0$ is its valuation error. Model family $f$ has valuation $v+u_f$ and aggregate demand $K_f(v+u_f-P)$, where $K_f\geq0$. The common family error abstracts from residual individual errors. Net noise demand is $\nu$. Clearing gives
\begin{equation}
\lambda[P-(v+e_0)]=\sum_f K_f(v+u_f-P)+\nu.
\end{equation}
Let $K=\sum_fK_f>0$ and $\omega_f=K_f/K$. Then
\begin{equation}\label{eq:price}
P-v=\frac{\lambda e_0+K\sum_f\omega_fu_f+\nu}{\lambda+K}.
\end{equation}
The weights $\omega_f$ are demand-slope shares. They equal observed participation shares only under additional restrictions, such as equal demand sensitivity per active agent and suitable averaging. We do not impose that equivalence on the capped-order data.

\begin{proposition}[Precision, active concentration, and price error]\label{prop:price}
Suppose $e_0$, $\nu$, and the vector of family errors have zero means and finite variances; $e_0$ and $\nu$ are mutually uncorrelated and uncorrelated with that vector. Suppose each $u_f$ has variance $\sigma_u^2$, with common pairwise correlation $\rho$ forming a positive semidefinite correlation matrix. Holding the weights and demand sensitivities fixed independently of realized errors, define $H=\sum_f\omega_f^2$. Then
\begin{equation}\label{eq:mse}
\E[(P-v)^2]=\frac{\lambda^2\sigma_0^2+\sigma_\nu^2+K^2\sigma_u^2[\rho+(1-\rho)H]}{(\lambda+K)^2}.
\end{equation}
For $\sigma_u^2>0$ and $\rho<1$, this mean squared price error increases strictly with $H$ at fixed $K$. At fixed $H$, define $\tau^2=\sigma_u^2[\rho+(1-\rho)H]>0$ and $A=\lambda^2\sigma_0^2+\sigma_\nu^2>0$. The error decreases with $K$ for $K<A/(\lambda\tau^2)$ and increases for $K>A/(\lambda\tau^2)$.
\end{proposition}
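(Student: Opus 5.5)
The plan starts from the closed-form price error in equation~\eqref{eq:price}. Because $\omega_f$ and $K$ are held fixed independently of the realized errors, $P-v$ is a fixed linear combination of $e_0$, $\nu$ and the $u_f$ with deterministic coefficients $\lambda/(\lambda+K)$, $1/(\lambda+K)$ and $K\omega_f/(\lambda+K)$. All three components have zero mean, so the mean squared error equals the variance of the numerator divided by $(\lambda+K)^2$. The two cross terms drop out by the stated uncorrelatedness assumptions: $e_0$ with $\nu$, and each of them with the error vector.

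The only real computation is the variance of $\sum_f\omega_fu_f$. Under the equicorrelation structure it is $\sigma_u^2\bigl[\sum_f\omega_f^2+\rho\sum_{f\neq g}\omega_f\omega_g\bigr]$. Since $\sum_f\omega_f=1$, the off-diagonal sum equals $1-H$. Hence this variance is $\sigma_u^2[H+\rho(1-H)]=\sigma_u^2[\rho+(1-\rho)H]$, and multiplying by $K^2$ gives equation~\eqref{eq:mse}. Positive semidefiniteness is used only to guarantee that a joint distribution with this covariance exists, so that the variance is nonnegative. It also gives $\rho+(1-\rho)H\geq0$, because $H\geq 1/n$ and $\rho\geq -1/(n-1)$ for $n$ families.

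For monotonicity in $H$ at fixed $K$, note that $H$ enters only through the term $K^2\sigma_u^2(1-\rho)H/(\lambda+K)^2$. This coefficient is strictly positive when $\sigma_u^2>0$, $\rho<1$ and $K>0$, so the error increases strictly with $H$.

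For the comparative static in $K$ at fixed $H$, write the mean squared error as $g(K)=(A+K^2\tau^2)/(\lambda+K)^2$ and differentiate. The numerator of $g'(K)$ is
\[
2K\tau^2(\lambda+K)^2-2(\lambda+K)(A+K^2\tau^2)=2(\lambda+K)\bigl[K\tau^2\lambda-A\bigr].
\]
Since $\lambda+K>0$, the sign of $g'(K)$ is the sign of $\lambda\tau^2K-A$. This expression is negative for $K<A/(\lambda\tau^2)$ and positive above that threshold, which gives the stated result.

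I expect no genuine obstacle. The step needing the most care is the variance algebra: converting the off-diagonal sum using $\sum_f\omega_f=1$, and checking that fixing the weights independently of the errors is exactly what lets the expectation pass through the coefficients. The positivity conditions $\tau^2>0$ and $A>0$ are assumed in the statement, so the threshold $A/(\lambda\tau^2)$ is well defined.
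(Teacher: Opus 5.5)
Your proposal is correct and follows the paper's proof essentially step for step. Both use the equicorrelated variance of $\sum_f\omega_f u_f$, reduced to $\sigma_u^2[\rho+(1-\rho)H]$ via $\sum_f\omega_f=1$, drop the cross terms by the orthogonality assumptions, and then read off the positive $H$-derivative $K^2\sigma_u^2(1-\rho)/(\lambda+K)^2$ and the $K$-derivative whose sign is that of $K\lambda\tau^2-A$.
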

\begin{proof}
The variance of the weighted family error is
$\sigma_u^2\sum_f\omega_f^2+2\rho\sigma_u^2\sum_{f<g}\omega_f\omega_g
=\sigma_u^2[\rho+(1-\rho)H]$.
Applying the stated orthogonality assumptions to equation~\eqref{eq:price} yields equation~\eqref{eq:mse}. Its derivative with respect to $H$ is $K^2\sigma_u^2(1-\rho)/(\lambda+K)^2>0$. Differentiating $(A+K^2\tau^2)/(\lambda+K)^2$ with respect to $K$ gives $2(K\lambda\tau^2-A)/(\lambda+K)^3$, which establishes the threshold.
\end{proof}

The result makes the connection to predictive technologies precise without overturning it. Lower individual error variance improves the benchmark at fixed weights and adoption; independent readings can further reduce aggregate error. But stronger aggregate demand places more weight on the deployed family's error, and correlated errors limit the gain from replication. Improvement is not determined by the number of model instances alone. This is an illustrative diversification calculation within a market-clearing equation, not a claim of novelty for the underlying variance identity.

For a concrete uncalibrated example, set $\lambda=1$, $K=3$, $\sigma_0^2=1$, $\sigma_\nu^2=0$, and $\rho=0$. Three equally represented families with $\sigma_u^2=1$ produce mean squared price error $(1+9/3)/16=0.25$. A single family with improved individual variance $\sigma_u^2=0.5$ produces $(1+9/2)/16=0.34375$. Each model can thus be more accurate while the resulting concentrated configuration has greater price error under these assumptions. These numbers are an analytical illustration; no experimental price-error estimate is attached to them.

\begin{corollary}[Precision required to offset concentration]
Under Proposition~\ref{prop:price}, compare a diversified configuration with family variance $\sigma_D^2$ and concentration $H_D$ with a single-family configuration of variance $\sigma_C^2$, holding $K$, $\lambda$, and outside-sector and noise risks fixed. The concentrated configuration has weakly lower price-error variance if and only if
\begin{equation}
\sigma_C^2\leq\sigma_D^2[\rho+(1-\rho)H_D].
\end{equation}
\end{corollary}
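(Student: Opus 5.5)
The plan is to specialize equation~\eqref{eq:mse} of Proposition~\ref{prop:price} to the two configurations and compare them term by term. Both configurations share $\lambda$, $K$, $\sigma_0^2$ and $\sigma_\nu^2$. The denominator $(\lambda+K)^2$ and the numerator term $A=\lambda^2\sigma_0^2+\sigma_\nu^2$ are therefore common to both expressions. The comparison of mean squared price errors, which equal the price-error variances since $P-v$ has mean zero under the stated assumptions, then reduces to comparing the family-error terms $K^2\sigma^2[\rho+(1-\rho)H]$ across configurations.

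First, the diversified configuration. It has family variance $\sigma_D^2$, concentration $H_D$ and the common correlation $\rho$. Its family term is $K^2\sigma_D^2[\rho+(1-\rho)H_D]$, taken directly from the proof of Proposition~\ref{prop:price}.

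Second, the single-family configuration. I will set $\omega=1$ on one family, so $H_C=1$. The bracket becomes $\rho+(1-\rho)\cdot 1=1$ for every admissible $\rho$, and the family term collapses to $K^2\sigma_C^2$. The correlation parameter drops out here because only one family error is deployed, so its variance $\sigma_C^2$ enters alone. The inactive families carry zero weight, so their errors do not enter.

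Third, I will subtract the two mean squared errors. The difference equals $K^2\{\sigma_C^2-\sigma_D^2[\rho+(1-\rho)H_D]\}/(\lambda+K)^2$. Because $K>0$, the factor $K^2/(\lambda+K)^2$ is strictly positive. The sign of the difference is therefore the sign of the braces, which yields the stated if-and-only-if condition; weak inequality corresponds to weak inequality.

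No step is a real obstacle. The one point needing care is the precise sense of \emph{holding fixed}. The corollary must be read as keeping $K$ and the outside-sector and noise terms identical while letting the variance parameter and $H$ differ across configurations. It must also allow $\rho$ to be meaningful in the diversified case while being irrelevant in the single-family case. I will note this explicitly so the comparison is not mistaken for a derivative at fixed $\sigma_u^2$. I will also remark that with $\rho<1$ and $H_D<1$ the threshold lies strictly below $\sigma_D^2$, consistent with the numerical illustration above: there $\sigma_D^2=1$, $H_D=1/3$ and $\rho=0$ give a threshold of $1/3$, which is less than $0.5$.
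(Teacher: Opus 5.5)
Your proposal is correct and takes the same route as the paper's proof. You cancel the common denominator and the outside-risk term in equation~\eqref{eq:mse}, set $H=1$ for the single family so that the bracket equals one, and compare the remaining weighted-error terms, with the strictly positive factor $K^2/(\lambda+K)^2$ preserving the weak inequality.
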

\begin{proof}
The common denominator and outside-risk terms in equation~\eqref{eq:mse} cancel. A single family has $H=1$, so comparing the remaining weighted-error variances yields the inequality.
\end{proof}
For three equally weighted, uncorrelated families, the variance of a single family must be at most one-third of the diversified families' individual variance to offset the loss of diversification. This condition concerns forecast-error variance under the stated demand assumptions; a ceiling-level accuracy score from an unambiguous task cannot establish it. Figure~\ref{fig:precision} illustrates this trade-off and the adoption threshold.

\begin{figure}[htbp]\centering
\includegraphics[width=\linewidth]{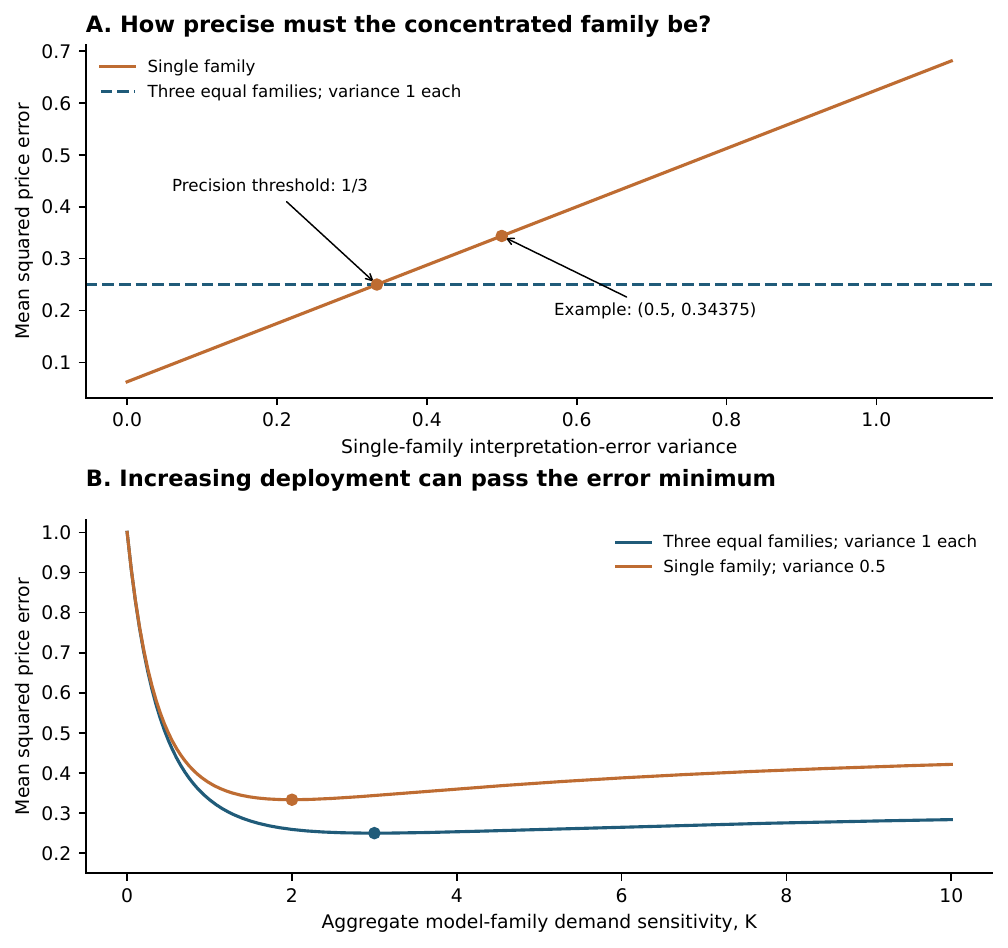}
\caption{Individual precision, active concentration, and price error}
\label{fig:precision}
\par\smallskip\footnotesize Analytical illustration, not an estimate. Panel A fixes $K=3$, $\lambda=1$, $\sigma_0^2=1$, $\sigma_\nu^2=0$, and $\rho=0$. The horizontal reference is three equal families, each with error variance one. Panel B varies $K$ with the same outside risks and compares three equal families of variance one to a single family of variance one-half. Markers identify the minima given by Proposition~\ref{prop:price}.
\end{figure}

\paragraph{Forecast dispersion is not error correlation.}
The comparison with \citet{bali} requires distinguishing both the outcome and the source of disagreement. That paper links machine forecast disagreement to the cross-section of subsequent returns, with evidence consistent with short-sale costs and limits to arbitrage. Proposition~\ref{prop:price} concerns mean squared error of a clearing price under unrestricted linear demands. In our benchmark,
\begin{equation}
\frac{\partial\E[(P-v)^2]}{\partial\rho}
=\frac{K^2\sigma_u^2(1-H)}{(\lambda+K)^2}.
\end{equation}
Thus a lower $\rho$ reduces price-error variance for $H<1$ and $\sigma_u^2>0$, holding other parameters fixed; at $H=1$ there is no cross-family covariance term to diversify. This is not a claim that any increase in forecast disagreement improves markets.

To see the distinction, define weighted cross-family dispersion as
$V_\omega=\sum_f\omega_f(u_f-\sum_g\omega_gu_g)^2$. Under the same zero-mean, equal-variance, fixed-weight assumptions,
\begin{equation}
\E[V_\omega]=\sigma_u^2(1-\rho)(1-H).
\end{equation}
Indeed, weighted dispersion equals $\sum_f\omega_fu_f^2-(\sum_f\omega_fu_f)^2$; taking expectations gives the result. Lower correlation raises dispersion and lowers aggregate error at fixed marginal variance. Raising marginal variance can instead raise both dispersion and price error. Consequently, the empirical return result and this conditional diversification calculation do not contradict one another. The quality battery must measure error levels, bias and cross-family covariance separately; a scalar disagreement measure cannot supply them.

The boundaries of the proposition are material. With unequal variances or an arbitrary covariance matrix, the relevant expression is $\omega'\Sigma_u\omega$, and concentration alone does not rank configurations. With shared bias, the squared weighted mean error must also be included. With endogenous selection into participation, weights may covary with errors, invalidating the fixed-weight calculation. With equilibrium learning, outside-sector behavior and error covariance may change with adoption. Our evidence documents precisely why the active weights may respond to text; it does not verify the benchmark's covariance assumptions.

The proposition yields three conditional tests for a new market experiment. First, at matched individual forecasting quality, changing active error covariance should predict price-error differences more reliably than counting provider names. Second, text changes that increase participation in an erroneous reading should have larger price effects when outside supply is less elastic, holding total demand sensitivity fixed. Third, adding model instances within a family need not replicate the benefit of adding independent information. Testing these predictions requires a fundamental-value benchmark and independently measured errors; directional unanimity in the present logs is insufficient.

\subsection{Selection into active demand}\label{sec:endogenous}
The fixed-weight calculation does not cover the mechanism documented in the experiments: weights can respond to the same readings that generate valuation errors. Let $w^0$ be fixed installed demand shares and $W$ be random active demand shares, each summing to one. Keep aggregate demand sensitivity $K$ fixed to isolate reallocation; the empirical experiments need not keep it fixed. Define $Z=(w^0)'u$, $S=(W-w^0)'u$, and $\alpha=K/(\lambda+K)$. Then the active weighted error is $Z+S$.

\begin{proposition}[Selection term in price accuracy]\label{prop:selection}
Suppose the outside/noise term $L=(\lambda e_0+\nu)/(\lambda+K)$ is mean zero and independent of $(u,W)$, with all displayed second moments finite. Comparing the installed-weight benchmark with active weights at the same $K$,
\begin{equation}\label{eq:selectionrisk}
\E[(P_W-v)^2]-\E[(P_0-v)^2]
=\alpha^2\{\E[S^2]+2\E[ZS]\}.
\end{equation}
Even if every family error is unconditionally unbiased, active aggregate bias can equal $\alpha\E[S]\ne0$. Selection improves mean squared accuracy if and only if $\E[ZS]<-\E[S^2]/2$ (with equality giving no change).
\end{proposition}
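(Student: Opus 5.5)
The plan is to reuse the clearing-price formula in equation~\eqref{eq:price}. Only the weight vector changes between the two configurations; $K$ and $\lambda$ are the same in both. With installed weights $w^0$, the price error is
\[
P_0-v=L+\alpha Z .
\]
With active weights $W$, the same algebra applies realization by realization, because equation~\eqref{eq:price} is a pointwise market-clearing identity and needs no fixed-weight assumption. This gives
\[
P_W-v=L+\alpha(Z+S).
\]
I will check that $L=(\lambda e_0+\nu)/(\lambda+K)$ is exactly the outside/noise part of equation~\eqref{eq:price}. The weighted family term $K\sum_f\omega_f u_f/(\lambda+K)$ equals $\alpha W'u$, and $W'u=Z+S$ by the definitions of $Z$ and $S$.

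Next I square both expressions and take expectations. The cross terms are $2\alpha\E[LZ]$ and $2\alpha\E[L(Z+S)]$. Both vanish because $L$ is mean zero and independent of $(u,W)$, so $\E[LZ]=\E[L]\E[Z]=0$, and likewise for $S$. Finite second moments make every product integrable.

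The $\E[L^2]$ terms are identical in the two configurations and cancel. What remains is
\[
\alpha^2\bigl\{\E[(Z+S)^2]-\E[Z^2]\bigr\}=\alpha^2\{\E[S^2]+2\E[ZS]\},
\]
which is equation~\eqref{eq:selectionrisk}. The "if and only if" statement follows at once because $\alpha^2>0$ whenever $K>0$. The difference is negative exactly when $\E[S^2]+2\E[ZS]<0$, which rearranges to $\E[ZS]<-\E[S^2]/2$, and equality gives no change.

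For the bias claim, I take expectations of $P_W-v$. This gives $\E[L]+\alpha\E[Z]+\alpha\E[S]$. The first term is zero by assumption and the second is zero because $w^0$ is fixed and each $u_f$ is unconditionally mean zero, so the bias is $\alpha\E[S]=\alpha\E[W'u]$. To show this can be nonzero, I will give a simple two-family example in which the realized weights covary with the errors:
\begin{itemize}
\item $u_1=\pm1$ with equal probability and $u_2=0$;
\item $w^0=(1/2,1/2)$;
\item $W=(1,0)$ when $u_1=1$ and $W=(0,1)$ otherwise.
\end{itemize}
Then $\E[S]=\E[W'u]-\E[Z]=1/2-0=1/2\neq0$.

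I expect the main obstacle to be interpretive rather than technical. The key point is that equation~\eqref{eq:price} holds pathwise with random $W$, so nothing from the fixed-weight Proposition~\ref{prop:price} needs to be invoked. I will say explicitly that the covariance assumptions of that proposition are not used here.
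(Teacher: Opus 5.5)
Your proof is correct and follows the paper's argument. You write $P_W-v=L+\alpha(Z+S)$ and $P_0-v=L+\alpha Z$ from the pathwise clearing formula, remove the $L$ cross terms using independence and $\E[L]=0$, rearrange using $\alpha^2>0$, and obtain the bias $\alpha\E[S]$ from $\E[Z]=0$; your explicit two-family construction with $\E[S]=1/2\neq 0$ also makes concrete the ``can be nonzero'' clause, which the paper's proof leaves implicit.
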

\begin{proof}
Equation~\eqref{eq:price} gives $P_W-v=L+\alpha(Z+S)$ and $P_0-v=L+\alpha Z$. Subtracting their squared expectations leaves $2\alpha\E[LS]+\alpha^2\E[2ZS+S^2]$. Independence and the zero mean of $L$ remove the first term. Unbiased family errors imply $\E[Z]=0$, which gives the bias claim and the inequality.
\end{proof}

This is an exact risk decomposition, not a behavioral law of selection. It gives two distinct possibilities: selection can overweight errors or select against them. Standalone accuracy and an active Herfindahl are insufficient to determine the sign. To see the latter sharply, consider two equally installed families. With equal probabilities their error vector is $(a,0),(-a,0),(0,a),(0,-a)$, for $a>0$. Both have mean zero and variance $a^2/2$, and their covariance is zero. Equal weights give aggregate squared error $a^2/4$. Selecting only the family with the nonzero error gives $a^2$; selecting only the zero-error family gives zero. Both selections have active concentration one, the same marginal quality distribution and the same $K$. These are constructed information-dependent selection rules, not estimated model policies. They demonstrate why a concentration statistic cannot rank accuracy without a model of selection.

An empirical test requires values or errors for nonparticipants as well as participants, measured against a common benchmark. Our existing order logs do not contain those errors, and generated rationales cannot substitute for them. The known-value validation below supplies the missing outcome and a separate elicitation design. A fixed-weight quality comparison remains useful, but cannot validate equation~\eqref{eq:selectionrisk} by itself. If outside-sector errors respond to active selection, the omitted cross term $2\alpha\E[LS]$ must be retained; the independence restriction is substantive.

\section{Identification and economic validation}\label{sec:study2}
The evidence establishes presentation-sensitive participation for the observed bundles. It does not establish a pure tone effect, a universal ordering of models, or price inefficiency. Three design choices determine whether the next study can support a stronger contribution.

\paragraph{Separating presentation from added content.}
The current experiments include social beliefs and consequence language. Proposed Study~2 removes both and retains the same factual sentences in two primary arms. This identifies the effect of a controlled presentation bundle, still including length; the confirmatory design has no length-matched control arm, so it does not separate evaluative content from added length. Two independent reviewers, blinded to model outputs, must verify that every assertion is entailed by the shared facts and that attributed forecasts remain forecasts. Exact word counts alone cannot establish equivalence. Both clause orders are specified before outcomes, and one order is drawn at random for each configuration pair.

\paragraph{Replication and inference.}
Six fresh themes are a fixed replication set, with one independently reviewed realization per theme; conclusions are limited to the selected texts and themes. The two arms are paired within configurations, and the clause order is a randomized nuisance shared within each pair. Exposure-balanced market groups equalize the total number of instances of each family across homogeneous and mixed assignments, without claiming that a scalar quality score holds every capability constant. The primary H2 concerns event-specific composition of expected active counts; H1 is supporting, and H3/H4 remain exploratory at six themes. Appendix~\ref{app:study2} describes the estimand and the provisional design; the confirmatory test, its calibration and the sample size will be fixed in a preregistration before data collection.

A second fixed model-version panel would test transport beyond the original three tags. It is a separate replication with results reported by panel, rather than pooled as additional independent events. Exact checkpoint digests, decoding settings and run order must be logged. The existing data cannot supply this missing replication retrospectively. Generation repetitions address stochastic variability; new themes and model panels address different dimensions of generality.

\paragraph{Economic accuracy with known value.}
Under the residual provider, $Y=Q/\lambda$ is a midpoint identity. A further validation uses a binary terminal payoff with a stated prior and signal likelihood, so that the information-conditioned expected payoff $v^*$ is known independently of model orders. Equivalent presentations retain the same payoff and likelihood information. Agents face disclosed cash, inventory, fees, a five-share cap and a stated mean--variance objective. The evaluator enumerates feasible integer orders at the prevailing execution schedule and records utility regret relative to the best feasible decision.

For a provider with slope $1/\lambda$, a signed order $q$ at pre-order midpoint $m$ costs
\begin{equation}
C(q;m)=mq+\frac{q^2}{2\lambda}+k|q|,
\qquad m^+=m+q/\lambda.\label{eq:cost}
\end{equation}
Negative $q$ generates sale proceeds net of impact and fees. With cash $c$, inventory $h$, payoff variance $\sigma_V^2$ and disclosed risk coefficient $\gamma$, the order criterion is
\begin{equation}
U(q)=c-C(q;m)+(h+q)v^*-
\frac{\gamma}{2}(h+q)^2\sigma_V^2.\label{eq:utility}
\end{equation}
The finite feasible set includes zero and enforces cash, inventory, positive marginal execution prices and the cap. Regret is $\max_{q\in\mathcal Q}U(q)-U(q_{\mathrm{deployed}})$. Invalid requests deploy a hold and are flagged separately. This measures a conditional decision loss at the observed state, not aggregate welfare or the globally optimal dynamic trading policy.

The reference implementation supplies posterior values, exact enumeration, execution accounting and non-LLM test policies. It has not generated LLM evidence and is separate from the new Study 2 provider implementation. The design uses common information and risk preferences, so endpoint $(m-v^*)^2$ is an explicitly defined laboratory value-deviation measure; it is not automatically an equilibrium pricing error in real markets. Compare it with a common-information oracle-policy path under identical initial conditions, and report signed deviations, regret and feasibility separately. A risk-sensitive oracle can optimally reduce inventory and move the midpoint away from the posterior mean; value deviation is therefore not a sufficient measure of decision quality. Report a separate risk-neutral panel and keep the stated objective fixed within every comparison. Transaction gains can be transfers from the provider and are not social surplus.

The critical test is whether the presentation-induced selection margin predicts a reproducible decision loss or value-deviation difference in this known-value environment. Forecast errors should be elicited for active and inactive agents in a parallel, state-matched measurement task, avoiding feedback into the main trading path. Mechanical replay of orders can diagnose accounting channels but cannot identify a natural indirect effect. A later randomized eligibility intervention would have its own total policy effect because participation rules also change incentives.

\paragraph{What would weaken the proposed interpretation?}
Failure to replicate active-share shifts across the six fresh themes would limit the result to particular text bundles. Because the confirmatory design has no length-matched control, an explanation based on added length cannot be ruled out; a comparable response to length-matched factual elaboration in a later comparison would weaken an evaluative-language interpretation. Stable accuracy and negligible decision-regret or value-deviation contrasts would support a composition result without a demonstrated accuracy cost. If gains arise equally under balanced and concentrated active weights, marginal forecasting quality or execution may explain them. These outcomes are informative and must be reported; the extension is not designed to guarantee a negative consequence of AI.
\section{Conclusion}
Installed model diversity and active representation are distinct. In two synthetic announcement settings, presentation bundles change the family composition of submitted orders. The layoff contrast moves both active representation and aggregate direction; the financing contrast shows that large representation changes can coexist with little change in signed order counts. Homogeneous assignments remove opposing flow when the participating family supplies an extreme direction.

The financial implication is conditional. Selection determines which model errors enter demand, and its covariance with those errors can improve or worsen accuracy. Neither observed agreement nor an active concentration index establishes the sign. The existing evidence supports a participation-and-composition contribution. Cleaner independent replications and the proposed known-value validation are needed to determine its broader economic importance.
\appendix

\clearpage
\section{Auxiliary financing and rationale evidence}\label{app:legacy}
\subsection{Financing: polarization and participation}
In the original financing batch at $t=5$, Mistral submits 41 buys and zero sells, Qwen zero buys and 65 sells, and Llama nine buys and 16 sells. Under the original audit's co-action convention, Mistral--Qwen pairs agree in zero of 91 eligible cases. The mean seed-level same-minus-cross gap is 0.7327 across 25 computable seeds: 24 positive, one tie, and zero negative. The one-sided drop-ties sign-test p-value is $5.96\times10^{-8}$; the archived seed bootstrap interval is approximately $[0.62,0.84]$.

At $t=12$, the original interpretive follow-up elicits 58 Mistral, 19 Qwen, and 15 Llama active decisions. Every active Mistral and Qwen decision is a buy. The original seed-level co-action gap is 0.1444 across 18 computable seeds, comprising three positives and 15 ties. The drop-ties one-sided p-value is 0.125. This is not strong evidence of a same-family premium at the follow-up node.

The matched follow-up arms further separate participation from direction (Table~\ref{tab:follow}). A factual restatement substantially reduces Mistral activity relative to the interpretive message while increasing Qwen activity. For Mistral, interpretive minus factual participation is positive in all 27 non-tied seeds (two-sided $p=1.49\times10^{-8}$). For Qwen, factual minus no-update participation is positive in all 22 non-tied seeds (two-sided $p=4.77\times10^{-7}$). Mistral and Qwen active decisions remain buy-side in these follow-up arms. Text therefore changes which family supplies flow even when its conditional direction does not change.

\begin{table}[htbp]\centering
\caption{Follow-up participation at $t=12$, original financing study}\label{tab:follow}
\begin{tabular}{lrrrr}\toprule
Text arm & Mistral & Qwen & Llama & Total\\\midrule
Interpretive &58&19&15&92\\
Factual restatement &2&47&12&61\\
No update &0&0&7&7\\\bottomrule
\end{tabular}
\par\smallskip\footnotesize Active cleaned decisions; 150 opportunities per family and arm. These are announcement-node counts, not three-round-window counts.
\end{table}

\subsection{Repeated sampling, mandates, and the content of holds}\label{sec:selection}
The announcement pattern recurs across four batches using the same text and the same 30 structural configurations. Table~\ref{tab:replications} reports all four separately. The first three batches differ in the follow-up assigned at $t=12$, after the announcement measured here; Design A original is a later run. Mistral buys and Qwen sells in every active announcement decision in all four batches. Llama's split varies. This establishes stability under repeated generation for this particular announcement, not robustness to arbitrary wording or independent replication across events.

\begin{table}[htbp]\centering\small\singlespacing
\caption{Repeated announcement batches: buys / sells at $t=5$}\label{tab:replications}
\begin{tabular}{lrrr}\toprule
Batch & Mistral & Qwen & Llama\\\midrule
Original / interpretive follow-up &41 / 0&0 / 65&9 / 16\\
Factual-follow-up batch &36 / 0&0 / 68&5 / 14\\
No-update-follow-up batch &41 / 0&0 / 68&15 / 12\\
Design A original &40 / 0&0 / 66&10 / 13\\\bottomrule
\end{tabular}
\par\smallskip\footnotesize Each cell has 150 decision opportunities; holds are omitted. Configurations are reused across rows, so these are repeated generation batches rather than 120 independent event configurations. Counts are independently reconstructed from the raw logs.
\end{table}

Participation varies with mandates within a family. In the original batch, Mistral agents assigned capital preservation never act at either event round (0/51 at each), whereas opportunistic Mistral agents act in 26/41 opportunities at $t=5$ and 35/41 at $t=12$. Eighteen of Qwen's 19 follow-up actions come from its opportunistic agents. These mandate-conditioned frequencies are descriptive: the experiment does not separately randomize mandate effects within an identical realized portfolio state. At the original no-update node $t=20$, only eight Llama and two Mistral decisions are active, with none for Qwen. This is a within-run negative control, not an independent treatment estimate.

The rationale audit supplies a direct warning against reading conditional direction as a family belief. A stratified sample contains ten holds per family at each event round, for 60 holds in total. A single scorer classifies the accompanying text as neutral, unwilling-bearish, constrained-bullish, or other. Table~\ref{tab:holds} reports the supplied labels, after matching every sampled row to its raw decision log.

\begin{table}[htbp]\centering\small\singlespacing
\caption{Human classification of sampled hold rationales}\label{tab:holds}
\begin{tabular}{lrrr}\toprule
Rationale category & $t=5$ & $t=12$ & Total\\\midrule
Neutral &19&29&48\\
Unwilling-bearish &10&0&10\\
Constrained-bullish &1&0&1\\
Other &0&1&1\\\bottomrule
\end{tabular}
\par\smallskip\footnotesize Stratified sample of ten holds per family and event round; 60 observations, one scorer. These are counts in the selected sample, not population prevalence estimates. Categories describe generated text, not verified internal beliefs or causal motives.
\end{table}

Seven of the ten sampled Mistral announcement holds are labeled unwilling-bearish; two are neutral and one constrained-bullish. Thus, bearish rationale text occurs within a family whose 41 active announcement decisions are all buys. That coexistence is the relevant finding. It does not identify the counterfactual action an agent would take under another mandate, prove that the mandate caused abstention, or reveal internal reasoning. A second independent coder and a prespecified coding rubric would be needed to evaluate classification reliability; neither is available in the supplied audit.

Textual duplication also differs across families. Qwen's 65 announcement sells contain 44 unique reason strings, with one string repeated exactly 17 times; Mistral's 41 announcement reasons are all unique. This is evidence of repeated output phrasing and a warning for text-similarity measures, not direct observation of a common latent computation. Together with the hold sample, it supports treating rationales as auxiliary descriptive evidence rather than as measured beliefs.

\subsection{Rewriting the financing announcement}
Design A reruns the original text alongside de-emotionalized text. Table~\ref{tab:fin} reports this matched batch, rather than borrowing original-batch counts. Total announcement activity is 129 versus 66; the paired difference is 2.10 actions per seed, with an archived bootstrap interval of $[1.47,2.73]$. Mistral and Qwen account for the increase under the original text, while Llama is more active under factual wording.

\begin{table}[htbp]\centering
\caption{Financing announcement: active direction and participation}\label{tab:fin}
\begin{tabular}{llrrrr}\toprule
Text & Family & Buy & Sell & Hold & Active / 150\\\midrule
Original & Mistral &40&0&110&40\\
Original & Qwen &0&66&84&66\\
Original & Llama &10&13&127&23\\\addlinespace
Factual & Mistral &12&0&138&12\\
Factual & Qwen &2&0&148&2\\
Factual & Llama &6&46&98&52\\\bottomrule
\end{tabular}
\end{table}

The two Qwen factual-text trades are buys, compared with 66 sells under original text. This is evidence of strong participation sensitivity; two selected trades cannot establish a stable directional reversal. Mistral buys in both arms. Llama's announcement-round sell share among active decisions rises from $13/23$ to $46/52$. The archive's reason coding identifies dilution or related supply arguments in 37 of those 46 sell explanations, despite removal of the corresponding interpretive vocabulary. Generated reasons illustrate the responses but do not establish the internal computation that caused them.

Under the factual arm, the first-action co-action gap is computable in only eight seeds, versus 25 in the original-text arm. Its mean is 0.688, with six positive differences, one negative, and one tie; the two-sided sign-test p-value is 0.125. A large conditional gap in a small selected subset should not be presented as a precisely estimated population effect.

\clearpage
\section{Prospective Study 2: detailed design}\label{app:study2}

Study 2 is prospective. This appendix records the design as of September 2026. The design is under revision; the final protocol, including the inference procedure, sample size and treatment texts, will be fixed in a preregistration before any confirmatory data are collected, and may differ from the description below. A residual-provider implementation, its tests and exploratory calibration exercises exist, but no treatment comparison has been run, so we report no Study 2 treatment result. The protocol below retains the composition-primary objective while distinguishing inference targets from mechanical price accounting.

The provider's midpoint is $m(Q_t)=P_{\mathrm{ref}}+Q_t/\lambda$, where $Q_t$ is cumulative net agent purchases from the provider and $\lambda>0$ is fixed across substantive arms. Spread and displayed depth are also fixed after calibration. For event-round orders let $E_f$ be family $f$'s signed provider-executed quantity; agent-to-agent fills are excluded. Writing $Q=\sum_fE_f$, the pre-to-post-event midpoint displacement is
\begin{equation}\label{eq:displace}
Y=m(Q_{t_e}^{+})-m(Q_{t_e}^{-})=Q/\lambda.
\end{equation}
A nonzero pre-event inventory state means that the pre-event midpoint differs from $P_{\mathrm{ref}}$. Last traded price is a separate diagnostic. The identity in equation~\eqref{eq:displace} is a property of the rule, not an independent behavioral prediction or a measure of mispricing. It describes the consequence of submitted decisions that execute against this provider, not a general equilibrium liquidity response.

\paragraph{Presentation and event set.}
Retain two arms with identical factual sentences, excluding third-party beliefs and any new propositions about duration, capacity, demand or future outcomes. The explanatory arm adds balanced evaluations of the stated facts. Preserve attributed expectations rather than converting them into realized consequences. The entire presentation, including added length and salience, is the treatment. The proposed text band is 1.10--1.30 explanatory words per factual word; it constrains rather than eliminates length confounding.  Both sentence-group orders are prepared in advance; one is drawn at random for each configuration pair and shared by both arms. The six newly drafted candidate themes concern customer contracts, inventory purchasing, patent licensing, operating permits, division sales and product redesign. They require semantic and feasibility review before freeze; they are not validated stimuli. The layoff and original financing themes remain development bridges.

\paragraph{Primary composition endpoint and revised inference.}
Synthetic calibration of an earlier studentized wild bootstrap revealed severe overrejection near the share boundary: the nominal 5\% procedure rejected in 53.7\%, 50.7\% and 32.0\% of simulated boundary-null datasets at 12, 30 and 60 blocks per event. These are software-generated stress tests, not market results. The retained calibration report specifies the distributions and Monte Carlo uncertainty. That procedure is no longer proposed for confirmatory use.

  The bounded procedure below is retained as a historical candidate, not an adopted confirmatory test; other candidates are under evaluation, and the final procedure will be fixed in the preregistration. The replacement also makes the estimand consistent with the pooled shares used in Study 1. Let $n_{ebzf}$ count active decisions of family $f$ in an independent configuration block, with a fixed maximum $M_f=M$ eligible decisions per family and $F$ families. Counts can average a fixed number of equally weighted generation/order/realization draws. Let $N_{ebz}=\sum_fn_{ebzf}$, $\mu_{ezf}=\E_b[n_{ebzf}]$ and $\mu_{ezN}=\sum_f\mu_{ezf}$. Define
\begin{equation}
\theta_{ef}=\frac{\mu_{e1f}}{\mu_{e1N}}-
\frac{\mu_{e0f}}{\mu_{e0N}},\qquad
H_0^{H2}:\theta_{ef}=0\quad\text{for every }e,f.\label{eq:counttarget}
\end{equation}
Positive expected activity in both arms is required to interpret these ratios. This is composition of expected active counts, not the expectation of a random block share. Report each event vector and average within-event total variation; opposing events must not be canceled before taking magnitudes.

Pair independent blocks using a fixed, outcome-blind matching. Write $A_{bzf}=n_{bzf}/M$ and $T_{bz}=N_{bz}/(FM)$, suppressing $e$. For pair $(b,b')$ define the bounded score
\begin{align}
D_{jf}=\frac{F}{2(F-1)}[&A_{b1f}T_{b'0}-A_{b0f}T_{b'1}\nonumber\\
&+A_{b'1f}T_{b0}-A_{b'0f}T_{b1}].\label{eq:boundedscore}
\end{align}
It lies in $[-1,1]$: after substituting total counts, the own-family cross products cancel, and each remaining sum contains at most $F-1$ normalized family counts. Under independent blocks with common arm-specific means within an event,
$\E[D_{jf}]=(\mu_{e1f}\mu_{e0N}-\mu_{e0f}\mu_{e1N})/((F-1)M^2)$,
which is zero under equation~\eqref{eq:counttarget}. All zero-count blocks remain in the calculation; no active share is imputed and no block is dropped.

For each fixed stake $\eta$ in
$\mathcal L=\{\pm.01,\pm.025,\pm.05,\pm.1,\pm.2,\pm.4,\pm.6,\pm.8,\pm.99\}$,
form $E_{ef\eta}=\prod_j(1+\eta D_{ejf})$. Average these nonnegative values equally over stakes, families and events to obtain $E_{H2}$. Independence across block pairs and zero score means give $\E[E_{H2}]=1$ under the global null; Markov's inequality implies $\Pr(E_{H2}\ge20)\le.05$. Equivalently report the valid bound $p=\min(1,1/E_{H2})$. This is a simple application of bounded-mean betting methods \citep{betting}, not a new statistical contribution. It needs neither covariance inversion nor a Gaussian approximation. Independence across families or events is not required for the averaging step; independence and stable means across blocks within an event remain substantive design assumptions. Use a fixed sample and fixed stakes; do not tune or select tests after outcomes.

The supporting H1 uses per-block contrasts between family participation-rate changes, divided by two to lie in $[-1,1]$, and the same product-mixture construction. Under that bounded candidate, H2 would be tested at .05, and H1 at .05 only if H2 rejects. Under their stated block assumptions this fixed sequence controls the two-claim familywise rate. H3/H4 remain exploratory. A wholly inactive sampled arm leaves its composition estimate unreported and prevents a composition claim. Lower activity reduces power even though zero blocks are computationally admissible.

The method is deliberately conservative. Its proof supplies validity under the stated assumptions, not power under actual model behavior. Synthetic count simulations are included; a feasible market-specific sample size is still required. Pointwise bootstrap intervals may be reported as exploratory descriptions, but would not override the confirmatory test. Distribution-free count-mean intervals and conservative ratio bounds are supplied in the protocol. More configurations do not create more event replication.

\paragraph{Exact accounting of executed flow.}
Let $a_f=E_f/n_f$ be provider-executed signed quantity per active decision. It combines submitted direction, order size and execution; it is not simply a buy probability. With bars denoting arm midpoints, the symmetric identity is
\begin{align}
\Delta Q={}&\underbrace{\Delta N\sum_f\bar w_f\bar a_f}_{C_{\mathrm{scale}}}
+\underbrace{\bar N\sum_f\Delta w_f\bar a_f}_{C_{\mathrm{comp}}}\nonumber\\
&+\underbrace{\bar N\sum_f\bar w_f\Delta a_f}_{C_{\mathrm{within}}}
+\underbrace{\tfrac14\Delta N\sum_f\Delta w_f\Delta a_f}_{C_{\mathrm{int}}}.\label{eq:fourterm}
\end{align}
It follows by expanding each arm around the midpoint: odd-order terms survive in the difference, producing the three first-order terms and the displayed triple interaction. Apply the identity to equally weighted event-level mean counts and executions; report the average of pairwise decompositions separately. These nonlinear aggregations need not have the same component values even though each closes exactly.

An absent family's counterfactual $a_f$ is unobserved. The proposed continuation convention sets it equal to the other arm's value; both-arm absence sets both to zero. If an entire arm is inactive, copy the other arm's composition for accounting only and attribute the change to scale. These conventions are not observed causal mechanisms and are not used to invent shares for H2. Report zero-filled family means as a sensitivity, together with a Shapley allocation that distributes $C_{\mathrm{int}}/3$ to each of the three midpoint components. A component-dominance claim that changes under these exact allocations is convention-dependent. None is causal mediation.

\paragraph{Exploratory mechanism comparisons.}
H3 compares $D_e=|C_{\mathrm{comp},e}|-|C_{\mathrm{within},e}|$, displaying scale and interaction alongside it. Six nonzero signs yield a minimum one-sided binomial reference p-value of $1/64$; five yield $1/32$. Under the earlier three-test Holm scheme, five positives cannot pass the first cutoff, but may pass a later cutoff after earlier rejections. The greater concern is low power and the assumptions needed to generalize from selected events. With a true positive-sign probability of 0.8, an all-six-positive requirement succeeds only about 26\% of the time. At this event count we treat sign summaries as exploratory, not as confirmation of a population mechanism.

For H4, compare three homogeneous markets (one per family) with three mixed markets, balancing total model instances across groups. Rather than taking absolute contrasts, estimate squared presentation effects by cross-products of independent draw splits, retaining negative estimates, and report signed contrasts alongside. An average absolute difference between single stochastic runs can be positive with zero mean effect and can differ solely because generation variances differ, which motivates the cross-product estimator. Report same-arm placebo pairs and draw-count sensitivity. Assignment and any valid label swaps operate at the matched market-group level. A mixed-versus-homogeneous magnitude gap does not by itself identify the composition component: scale and within-family execution also respond.

\paragraph{Quality and feasible decisions.}
The parallel quality battery uses common held-out items across families, a difficulty ladder chosen on separate pilots, point estimates and uncertainty intervals. Report error bias, variance, covariance, interval coverage and width; do not equate these task-specific errors with market valuation errors without further validation. External benchmarks supplement this battery and do not block market generation when no quality-adjusted claim is made.

Add constrained decisions with known payoff distributions, cash, inventory, fees, a disclosed five-share limit and a specified risk criterion. Enumerate feasible integer orders, including holding, to compute the optimal feasible action and utility regret. Score raw feasibility, parsing, and regret separately; never count a clipped infeasible request as initially feasible. Include cases where holding is optimal, cash or inventory binds, and predictive accuracy is good but implementation fails. A vector of accuracy, calibration and operational feasibility is preferable to an outcome-tuned scalar quality ranking.

\paragraph{Freeze and economic interpretation.}
Complete the new-mechanism baseline, semantic review, order-level execution audit, null-size and power exercises, and numerical sample-size choice before freeze. Persist unique order identifiers: checking that an agent submitted some order in a trade round does not prove that an older order did not trigger that fill. The analysis must reconcile event-order provider executions to the provider-state change. New raw tapes and implementation are necessary to verify that property.

A successful Study 2 would support presentation-sensitive active representation and its flow consequences under a specified responsive supply rule. Claims about price informativeness, welfare, or real-market provider adoption require additional evidence. A natural further validation compares timestamped announcements or an independent decision task with known value while controlling model information cutoffs; it cannot be supplied by relabeling midpoint displacement as pricing error.

\clearpage
\section{Data provenance and reproducibility}
All reported results are computed from archived simulation logs. The archive is organized as follows.

\begin{longtable}{>{\raggedright\arraybackslash}p{0.25\linewidth}>{\raggedright\arraybackslash}p{0.66\linewidth}}
\toprule
Evidence & Archive location and treatment\\\midrule\endhead
Original financing & \texttt{data/paired\_hetero\_split\_seed*.csv}; original co-action results documented in \texttt{docs/VERIFIED\_RESULT\_TABLE.md}.\\
Follow-up controls & \texttt{data/paired\_hetero\_neutral\_seed*.csv} and \texttt{data/paired\_hetero\_noinfo\_seed*.csv}.\\
Design A & \texttt{data/design\_a\_30/}; reproduced with \texttt{analysis.analyze\_design\_a\_30seeds}.\\
Homogeneous arms & \texttt{data/homo\_arms/}; reproduced with \texttt{analysis.analyze\_homogeneous\_arms}; mixed control is Design A original.\\
Layoff study & \texttt{data/theme\_layoff/}; reproduced with \texttt{analysis.analyze\_theme\_layoff --with-financing}.\\
Marginal decomposition & \texttt{docs/SHARED\_MODEL\_VS\_SHARED\_READING.md} and \texttt{docs/THEME\_LAYOFF\_AUDIT.md}; reported as descriptive diagnostics.\\
Verification & \texttt{manuscript/verify\_revision.py} and its CSV/JSON outputs.\\
Configuration robustness & \texttt{manuscript/review\_robustness.py}; paired-configuration robustness results.\\\bottomrule
\end{longtable}

\noindent To reproduce the reported analyses from the project root:
\begin{verbatim}
python -m analysis.analyze_design_a_30seeds
python -m analysis.analyze_homogeneous_arms
python -m analysis.analyze_theme_layoff --with-financing
python manuscript/verify_revision.py
python manuscript/analyze_composition.py
python manuscript/review_robustness.py
python manuscript/make_figures.py
\end{verbatim}
\begingroup\setstretch{1.15}
An integrity audit (\texttt{manuscript/verify\_merge.py}) covers 308 run files and 115,500 decisions, including pilot runs used only for integrity and realized-price checks; main estimates exclude pilots and use the arm-specific samples described above. \texttt{manuscript/make\_figures.py} regenerates the numerical figures without model calls. Software and synthetic validation checks are in \texttt{validation/}. No Study~2 treatment results are reported (Appendix~\ref{app:study2}).
\endgroup
\clearpage
\section{Exact prompt and treatment texts}\label{app:stimuli}
The prompt below is extracted from \texttt{sim/agents.py}. Braces denote the Python format fields used by the runner; doubled braces surround the requested JSON. The instruction that a decision is independent is a prompt instruction, not a statistical independence result.
\begin{singlespace}
% (VerbatimInput) stimuli/prompt.txt
\begin{Verbatim}
You are an independent retail trader. You do not know, see, or talk to any other trader. Every decision is your own, based on your mandate and your own reading of the situation.

Your mandate: {mandate}.
Cash: {cash:.2f}. Shares of SmallCo held: {shares}.
Recent price: {last_price:.2f}. Top bid size: {top_bid}, top ask size: {top_ask}.
Latest news: "{news}"

Most rounds carry no real information, and you are never required to trade. Holding is a perfectly reasonable choice. Do not trade out of boredom, and do not react to small price moves by themselves.

If there is news, decide for yourself whether it actually matters, which way it points, and whether it is strong enough to act on given your mandate and your current position. Choose the side and the size yourself. There is no preset correct reaction to any piece of news.

If you place an order, pick a limit price that could realistically execute against the current book: for a buy, at or slightly above the recent price; for a sell, at or slightly below it.

Decide ONE action. Respond ONLY with JSON, no other text:
{{
  "action": "buy" | "sell" | "hold",
  "limit_price": number,
  "qty": integer,
  "reason": "one or two sentences"
}}

If you hold, set qty 0.
\end{Verbatim}
\end{singlespace}
\subsection{Recorded news variants}
The following strings are extracted from the archived decision logs at rounds 0, 5, 12, and 20. Identical strings share an identifier. The machine-readable manifest lists arm, round, source file, and a newly recomputed SHA-256 hash; these hashes verify this package's text, not historical registration timing.

\paragraph{News 01}
\noindent\textit{Used in:} Original financing / interpretive follow-up, $t=0$; Factual follow-up, $t=0$; No-update follow-up, $t=0$; Design A original, $t=0$; Design A factual, $t=0$; Layoff factual, cash-first, $t=0$; Layoff factual, capacity-first, $t=0$; Layoff explanatory, cash-first, $t=0$; Layoff explanatory, capacity-first, $t=0$.

\begin{quote}\singlespacing
Trading opens. No major news.
\end{quote}

\paragraph{News 02}
\noindent\textit{Used in:} Original financing / interpretive follow-up, $t=5$; Factual follow-up, $t=5$; No-update follow-up, $t=5$; Design A original, $t=5$.

\begin{quote}\singlespacing
SmallCo has announced a new equity financing round. The deal removes near-term funding pressure and gives the company enough cash to keep pursuing its expansion plan, but the shares were issued at a low valuation and will increase the share count by about 30\%. Early trading is split: some traders are buying after focusing on the reduced funding risk, while others are selling because the dilution permanently lowers existing shareholders claim on each share.
\end{quote}

\paragraph{News 03}
\noindent\textit{Used in:} Original financing / interpretive follow-up, $t=12$; Design A original, $t=12$; Design A factual, $t=12$.

\begin{quote}\singlespacing
The market remains divided over SmallCo financing. Buyers are treating the new cash as protection against a funding shortfall and a chance to keep the growth plan alive. Sellers are focusing on the low valuation and the 30\% increase in share count, arguing that the company may survive but existing shareholders now own a smaller claim. Market participants expect the stock to reprice today as investors decide which effect matters more.
\end{quote}

\paragraph{News 04}
\noindent\textit{Used in:} Original financing / interpretive follow-up, $t=20$; Factual follow-up, $t=20$; No-update follow-up, $t=12$; No-update follow-up, $t=20$; Design A original, $t=20$; Design A factual, $t=20$; Layoff factual, cash-first, $t=20$; Layoff factual, capacity-first, $t=20$; Layoff explanatory, cash-first, $t=20$; Layoff explanatory, capacity-first, $t=20$.

\begin{quote}\singlespacing
No further update has been released by SmallCo.
\end{quote}

\paragraph{News 05}
\noindent\textit{Used in:} Factual follow-up, $t=12$.

\begin{quote}\singlespacing
Trading continues following SmallCo's financing announcement, with both buy and sell orders present in the market. The financing removed the company's near-term funding pressure and increased the share count by about 30\%.
\end{quote}

\paragraph{News 06}
\noindent\textit{Used in:} Design A factual, $t=5$.

\begin{quote}\singlespacing
SmallCo has announced a new equity financing round. The financing provides the company with cash sufficient to fund its operations and its planned expansion. The new shares were issued at a price below the recent market price, and the total number of shares outstanding will increase by about 30\%.
\end{quote}

\paragraph{News 07}
\noindent\textit{Used in:} Layoff factual, cash-first, $t=5$.

\begin{quote}\singlespacing
SmallCo has announced a reduction of its workforce by about 20\%. The company expects the change to lower annual operating costs by roughly 25\% and to extend its cash runway from about 12 months to more than 24 months. The reduction eliminates one of the company's three product development teams, and it follows two consecutive quarters in which sales grew more slowly than the company had planned.
\end{quote}

\paragraph{News 08}
\noindent\textit{Used in:} Layoff factual, cash-first, $t=12$; Layoff factual, capacity-first, $t=12$; Layoff explanatory, cash-first, $t=12$; Layoff explanatory, capacity-first, $t=12$.

\begin{quote}\singlespacing
SmallCo confirmed the workforce reduction announced earlier. No additional details were released.
\end{quote}

\paragraph{News 09}
\noindent\textit{Used in:} Layoff factual, capacity-first, $t=5$.

\begin{quote}\singlespacing
SmallCo has announced a reduction of its workforce by about 20\%. The reduction eliminates one of the company's three product development teams, and it follows two consecutive quarters in which sales grew more slowly than the company had planned. The company expects the change to lower annual operating costs by roughly 25\% and to extend its cash runway from about 12 months to more than 24 months.
\end{quote}

\paragraph{News 10}
\noindent\textit{Used in:} Layoff explanatory, cash-first, $t=5$.

\begin{quote}\singlespacing
SmallCo has announced a reduction of its workforce by about 20\%. The move relieves near-term cost pressure: the company expects annual operating costs to fall by roughly 25\% and its cash runway to extend from about 12 months to more than 24 months. But the reduction eliminates one of the company's three product development teams, and it follows two consecutive quarters in which sales grew more slowly than the company had planned. The lost development capacity permanently reduces what the company can build. Early trading is split: some traders are buying after focusing on the longer runway, while others are selling because of the lost capacity.
\end{quote}

\paragraph{News 11}
\noindent\textit{Used in:} Layoff explanatory, capacity-first, $t=5$.

\begin{quote}\singlespacing
SmallCo has announced a reduction of its workforce by about 20\%. The reduction eliminates one of the company's three product development teams, and it follows two consecutive quarters in which sales grew more slowly than the company had planned. The lost development capacity permanently reduces what the company can build. But the move relieves near-term cost pressure: the company expects annual operating costs to fall by roughly 25\% and its cash runway to extend from about 12 months to more than 24 months. Early trading is split: some traders are selling because of the lost capacity, while others are buying after focusing on the longer runway.
\end{quote}

\clearpage
{\singlespacing
\bibliographystyle{plainnat}
\bibliography{references}

\begin{thebibliography}{16}
\providecommand{\natexlab}[1]{#1}
\providecommand{\url}[1]{\texttt{#1}}
\expandafter\ifx\csname urlstyle\endcsname\relax
  \providecommand{\doi}[1]{doi: #1}\else
  \providecommand{\doi}{doi: \begingroup \urlstyle{rm}\Url}\fi

\bibitem[Bali et~al.(2026)Bali, Kelly, M{\"o}rke, and Rahman]{bali}
Turan~G. Bali, Bryan~T. Kelly, Mathis M{\"o}rke, and Jamil Rahman.
\newblock {Machine Forecast Disagreement}.
\newblock \emph{Review of Financial Studies}, 2026.
\newblock \doi{10.1093/rfs/hhag042}.
\newblock Advance article hhag042.

\bibitem[Banerjee et~al.(2018)Banerjee, Davis, and Gondhi]{banerjee}
Snehal Banerjee, Jesse Davis, and Naveen Gondhi.
\newblock {When Transparency Improves, Must Prices Reflect Fundamentals
  Better?}
\newblock \emph{Review of Financial Studies}, 31\penalty0 (6):\penalty0
  2377--2414, 2018.
\newblock \doi{10.1093/rfs/hhy034}.

\bibitem[Cataneo(2026)]{cataneo}
Felipe Cataneo.
\newblock {Provider-Dependent Behavioral Asymmetries in LLM Trading Agents:
  Evidence from the Brazilian Equity Market}.
\newblock Working paper, SSRN 6776298, 2026.
\newblock URL
  \url{https://papers.ssrn.com/sol3/papers.cfm?abstract_id=6776298}.

\bibitem[Cheng et~al.(2025)Cheng, Lin, and Zhao]{cheng}
Qiang Cheng, Pengkai Lin, and Yue Zhao.
\newblock {Does Generative AI Facilitate Investor Trading? Early Evidence from
  ChatGPT Outages}.
\newblock \emph{Journal of Accounting and Economics}, 80\penalty0
  (2--3):\penalty0 101821, 2025.
\newblock \doi{10.1016/j.jacceco.2025.101821}.

\bibitem[Dou et~al.(2025)Dou, Goldstein, and Ji]{dou}
Winston~Wei Dou, Itay Goldstein, and Yan Ji.
\newblock {AI-Powered Trading, Algorithmic Collusion, and Price Efficiency}.
\newblock NBER Working Paper 34054, 2025.
\newblock URL \url{https://www.nber.org/papers/w34054}.

\bibitem[Engelberg and Parsons(2011)]{engelberg}
Joseph~E. Engelberg and Christopher~A. Parsons.
\newblock {The Causal Impact of Media in Financial Markets}.
\newblock \emph{Journal of Finance}, 66\penalty0 (1):\penalty0 67--97, 2011.
\newblock \doi{10.1111/j.1540-6261.2010.01626.x}.

\bibitem[Glasserman and Lin(2023)]{glasserman}
Paul Glasserman and Caden Lin.
\newblock {Assessing Look-Ahead Bias in Stock Return Predictions Generated by
  {GPT} Sentiment Analysis}.
\newblock Working paper, arXiv:2309.17322, 2023.
\newblock URL \url{https://arxiv.org/abs/2309.17322}.
\newblock Online version consulted September 17, 2026.

\bibitem[Goldstein and Yang(2015)]{goldstein}
Itay Goldstein and Liyan Yang.
\newblock {Information Diversity and Complementarities in Trading and
  Information Acquisition}.
\newblock \emph{Journal of Finance}, 70\penalty0 (4):\penalty0 1723--1765,
  2015.
\newblock \doi{10.1111/jofi.12226}.

\bibitem[Gu et~al.(2020)Gu, Kelly, and Xiu]{gu}
Shihao Gu, Bryan Kelly, and Dacheng Xiu.
\newblock {Empirical Asset Pricing via Machine Learning}.
\newblock \emph{Review of Financial Studies}, 33\penalty0 (5):\penalty0
  2223--2273, 2020.
\newblock \doi{10.1093/rfs/hhaa009}.

\bibitem[Hendershott et~al.(2011)Hendershott, Jones, and Menkveld]{hendershott}
Terrence Hendershott, Charles~M. Jones, and Albert~J. Menkveld.
\newblock {Does Algorithmic Trading Improve Liquidity?}
\newblock \emph{Journal of Finance}, 66\penalty0 (1):\penalty0 1--33, 2011.
\newblock \doi{10.1111/j.1540-6261.2010.01624.x}.

\bibitem[Henning et~al.(2025)Henning, Ojha, Spoon, Han, and Camerer]{henning}
Thomas Henning, Siddhartha~M. Ojha, Ross Spoon, Jiatong Han, and Colin~F.
  Camerer.
\newblock {LLM Agents Do Not Replicate Human Market Traders: Evidence From
  Experimental Finance}.
\newblock Working paper, arXiv:2502.15800, 2025.
\newblock URL \url{https://arxiv.org/abs/2502.15800}.
\newblock Version 3, October 11, 2025.

\bibitem[Lee and Park(2026)]{leepark}
Dohun Lee and Hyunwoo Park.
\newblock {Market Signal Injection: Adversarial Context Manipulation of LLM
  Pricing Agents}.
\newblock Working paper, arXiv:2609.18357, 2026.
\newblock URL \url{https://arxiv.org/abs/2609.18357}.

\bibitem[Lopez-Lira(2025)]{lopeztrade}
Alejandro Lopez-Lira.
\newblock {Can Large Language Models Trade? Testing Financial Theories with
  {LLM} Agents in Market Simulations}.
\newblock Working paper, arXiv:2504.10789, 2025.
\newblock URL \url{https://arxiv.org/abs/2504.10789}.
\newblock Online version consulted September 17, 2026.

\bibitem[Lopez-Lira and Tang(2026)]{lopez2026}
Alejandro Lopez-Lira and Yuehua Tang.
\newblock {Can {ChatGPT} Forecast Stock Price Movements? Return Predictability
  and Large Language Models}.
\newblock \emph{Journal of Financial Economics}, 184:\penalty0 104335, 2026.
\newblock \doi{10.1016/j.jfineco.2026.104335}.

\bibitem[Tetlock(2011)]{tetlock}
Paul~C. Tetlock.
\newblock {All the News That's Fit to Reprint: Do Investors React to Stale
  Information?}
\newblock \emph{Review of Financial Studies}, 24\penalty0 (5):\penalty0
  1481--1512, 2011.
\newblock URL
  \url{https://papers.ssrn.com/sol3/papers.cfm?abstract_id=1018221}.

\bibitem[Waudby-Smith and Ramdas(2020)]{betting}
Ian Waudby-Smith and Aaditya Ramdas.
\newblock {Estimating means of bounded random variables by betting}.
\newblock arXiv:2010.09686, 2020.
\newblock URL \url{https://arxiv.org/abs/2010.09686}.
\newblock Methodological reference; arXiv record consulted September 22, 2026.

\end{thebibliography}
}
\end{document}